\newcommand{\be}{\begin{equation}}
\newcommand{\ee}{\end{equation}}
\newcommand{\ben}{\begin{equation*}}
\newcommand{\een}{\end{equation*}}
\newcommand{\mc}{\mathcal}
\newcommand{\mbf}{\mathbf}
\newtheorem{lemma}{Lemma}[section]
\newtheorem{defi}{Definition}[section]
\newtheorem{thm}{Theorem}
\newtheorem{rem}{Remark}[section]
\newtheorem{prop}[lemma]{Proposition}
\newcommand{\e}{\epsilon}
\newcommand{\abs}[1]{\left\lvert#1\right\rvert}
\newcommand{\norm}[1]{\left\lVert#1\right\rVert}
\newcommand{\expec}{\mathbb{E}}
\DeclareMathOperator*{\argmin}{arg\,min}
\DeclareMathOperator*{\esssup}{ess\,sup}
\newcommand{\ep}{{\mathbb {E}}}
\newcommand{\II}{{\mathbb {I}}}
\newcommand{\pr}{{\mathbb {P}}}
\newcommand{\re}{{\mathbb{R}}}
\newcommand{\vc}[1]{{\mathbf #1}}
\newcommand{\vd}[1]{{\boldsymbol #1}}
\newcommand{\ol}[1]{{\overline{#1}}}
\newcommand{\whth}{\widehat{\theta}}
\newcommand{\whG}{\widehat{G}}
\newcommand{\whi}{\hat{i}}
\newcommand{\whv}{\hat{V}}
\newcommand{\whx}{\hat{\vc{x}}}
\newcommand{\ftau}{f_{\vd{\tau}}}
\newcommand{\MM}{{\mathcal{M}}}
\newcommand{\FF}{{\mathcal{F}}}
\newcommand{\AAA}{{\mathcal{A}}}
\newcommand{\pack}{{\mathcal{P}}_{M, d_{\min}}}
\newcommand{\packval}[1]{{\mathcal{P}}_{M, #1}}
\newcommand{\packvalp}[1]{{\mathcal{P}}_{M', #1}}
\newcommand{\packvalx}[1]{{\mathcal{P}}_{M+1, #1}}
\newcommand{\minimaxcs}{\mathsf{M}^*(\mbf{A})}
\newcommand{\trace}{\textsf{Tr}}
\title{\vspace{-20pt}A strong converse bound for multiple hypothesis testing, \\ with applications to high-dimensional estimation \vspace{-10pt}}
\date{\today}
\author{Ramji Venkataramanan\thanks{Department of Engineering, University of Cambridge, Trumpington Street, Cambridge CB3 0DZ, UK. Email: {\tt ramji.v@eng.cam.ac.uk}}  \and 
Oliver Johnson\thanks{School of Mathematics, University of Bristol, University Walk, Bristol, BS8 1TW, UK. Email: {\tt maotj@bristol.ac.uk}} }
\begin{document}

\maketitle

\begin{abstract}
In  statistical inference problems, we wish to obtain lower bounds on the minimax risk, that is to bound  the performance of any possible estimator. A standard technique to do this involves the use of Fano's inequality. However, recent work in an information-theoretic setting has shown  that an argument based on binary hypothesis testing gives tighter converse results (error lower bounds) than Fano for channel coding problems. We adapt this technique to the statistical setting, and argue that Fano's inequality can always be replaced by this approach to obtain tighter lower bounds that can be easily computed and are asymptotically sharp. We illustrate our technique in three  applications: density estimation, active learning of a binary classifier, and compressed sensing,  obtaining tighter risk lower bounds in each case.
\end{abstract}
 
 \section{Introduction} \label{sec:intro}

When solving an inference problem, we would like to know if the algorithm we use is close to optimal. In statistical language we seek to give a lower  bound on the performance
  of any  estimator over a class of problems (often called the minimax risk over the class).
In  the language of information theory, we speak of  converse results, which  give  performance bounds for all
communication  schemes  over a  noisy channel.

In the statistics literature, one  standard approach to proving converse results is via Fano's inequality (see \cite[Theorem 2.11.1]{cover}). 
However, recent  information-theoretic literature has shown how to obtain sharper converse bounds. The resulting improvements can be significant at finite sample size, and give bounds that are close to optimal, as illustrated in the work of Polyanskiy, Poor and Verd\'{u} \cite{ppv10}.
The present paper shows how the method of \cite{ppv10}, although developed for channel coding problems,  gives stronger risk lower bounds  for high-dimensional estimation problems, compared to  the standard Fano approach.

We first describe the  general set-up, following the treatment and notation of \cite[Chapter 2]{tsybakov09Book}.
Consider an inference problem (possibly non-parametric) where we wish to estimate some  $\theta \in \FF$ from samples $\vc{Y} = (Y_1, \ldots, Y_n)$ generated according to a distribution $P_{\theta}(\vc{Y})$. For example, 
in Section \ref{sec:densityest} we consider $\theta$ to be a probability density chosen from a pre-specified class, and
in Section \ref{sec:cs_bound} we consider $\theta$ to be a $k$-sparse vector in $\re^n$. Let $\whth := \whth(\vc{Y})$ be any estimator of $\theta$ and let $d(\theta, \whth)$ represent the loss. We assume that $d$ is a distance, although (as in \cite{tsybakov09Book}) our results hold when $d$ is a semi-distance; that is,  when $d(\theta, \theta') = 0$ need not imply that $\theta = \theta'$.   We obtain lower bounds on  the minimax risk
\begin{equation} \label{eq:tolowerbound}
\inf_{\whth} \sup_{\theta \in \FF} \, \ep\left[ w( d(\theta, \whth) ) \right],
\end{equation}
where $w$ is any monotonically increasing function with $w(0) = 0$. For example, we may consider $w(u) = u^p$ for any $p > 0$ or $w(u) = \II( u \geq c)$ for some $c > 0$.

  A standard approach for obtaining a lower bound on \eqref{eq:tolowerbound} is as follows. First, a set $\{ \theta_1, \ldots, \theta_M \}$ $\subseteq \FF$ is chosen, with a lower bound on the pairwise distance between any two of its elements, where the distance is measured using the loss function $d(\cdot, \cdot)$. Then,  any estimator $\widehat{\theta}$  defines an $M$-ary hypothesis test that  detects one of $\{ \theta_1, \ldots, \theta_M \}$ based on the data $\mbf{Y}$. Next, the key step is to obtain a lower bound for the error probability associated with this hypothesis test. For a well-constructed set, Fano's inequality  often shows that this average error probability  is bounded away from $0$ as $n \to \infty$. In this paper, we present a technique that often shows that  it  approaches $1$ as $n \to \infty$. In information theory parlance (see for example \cite[P.207]{cover}), we prove a ``strong converse" result in contrast to the ``weak converse" provided by Fano's inequality. 

We now explain the details, following the  framework in \cite{tsybakov09Book} (see also \cite{gine}, \cite{massart2007}). For any positive constants $A$ and 
$\psi_n$, using Markov's inequality we have
\begin{align*}
\pr \left( d( \theta, \whth) \geq A \psi_n \right) 
& =  \pr \left( w \left( \frac{1}{\psi_n} d(\theta, \whth) \right) \geq w(A) \right) 
 \leq  \frac{ \ep \left[ w \left( \frac{1}{\psi_n} d(\theta, \whth) \right) \right]}{w(A)},
\end{align*}
which implies 
\begin{equation} \label{eq:tsyb}
\sup_{\theta \in \FF} \ep \left[ w \left( \frac{1}{\psi_n} d( \theta, \whth) \right) \right] \geq w(A) \left( \sup_{\theta \in \FF} \pr \left( d( \theta, \whth) \geq A \psi_n \right) \right).
\end{equation}
When applying \eqref{eq:tsyb}, we typically choose $\psi_n$ as a decreasing function of $n$ to give the desired convergence rate, and $A$ as a constant that can be used to optimize the lower bound.
The goal then is  to  control the bracketed term on the RHS of \eqref{eq:tsyb}  to obtain a lower bound on the minimax risk.  We use the following definition.
\begin{defi} A collection $\pack = \{ \theta_1, \ldots, \theta_M \} \subseteq \FF$ is called a packing set of size $M$ and minimum distance $d_{\min}$ if
$$ d( \theta_i, \theta_j ) \geq d_{\min}, \mbox{ \;\;\; for all $i \neq j$.}$$
\end{defi}
In general, the packing set is not explicitly constructed, but its existence is guaranteed via combinatorial arguments. In Remarks \ref{rem:binyuset} and \ref{rem:castronowakset}
below,  existence of packing sets  is guaranteed by applying the Gilbert--Varshamov bound. In Remark \ref{rem:candesset}, the existence of a packing set is guaranteed via the probabilistic method.
We  emphasize that we use these existing packing set constructions: our contribution is to provide tighter lower bounds than can be obtained using Fano's inequality. It is possible that the resulting risk lower bounds could be improved by a further constant factor, by varying the packing set construction.

In statistical language, we  think of the packing set $\pack$ as multiple hypotheses to be distinguished on the basis of  data. An alternative information-theoretic interpretation is to think of $\pack$ as a codebook, that is a collection of $M$ codewords, one of which is transmitted over a noisy communication channel.
Given a packing set $\pack$, any estimator $\whth$ provides a way to distinguish between multiple hypotheses (act as a channel decoder) as follows: given $\whth$, we choose 
$\whi = \argmin_j d(\whth, \theta_j)$, i.e. the index of the closest value in the packing set. In coding theory, this is called the minimum distance decoder.

 If $\theta_i$ is the true value, a simple triangle inequality argument shows that 
$\left\{ \whi \neq i \right\}  \Rightarrow \left\{ d( \theta_i, \whth) \geq d_{\min}/2 \right\}$.
Taking $d_{\min} = 2 A \psi_n$,   we can bound the bracketed term on the RHS of  \eqref{eq:tsyb} by the average error probability $\e_M$ of the optimal decoder $i^* = i^*(\vc{Y})$,
since
\begin{align}
  \sup_{\theta \in \FF} \,  \pr \left( d( \theta, \whth) \geq A \psi_n \right) & \geq \max_{i \in \{1, \ldots, M \} } \pr \left( d( \theta_i, \whth) \geq A \psi_n \right) \nonumber \\
& \geq \max_{i \in \{1, \ldots, M \} } \pr( \theta_{\, \whi} \neq \theta_i)   \nonumber \\
& \geq   \max_{i \in \{1, \ldots, M \} } \pr( \theta_{i^*} \neq \theta_i) \nonumber \\
& \geq  \frac{1}{M} \sum_{i =1 }^M \pr( \theta_{i^*} \neq \theta_i) =: \e_M. \label{eq:tocontrol0}
\end{align}
This calculation and argument are standard in the literature (see for example \cite[Eq. (2.9)]{tsybakov09Book}, \cite[Corollary 2.19]{massart2007}).
By substituting \eqref{eq:tocontrol0} in \eqref{eq:tsyb}, we deduce
\be \inf_{\whth} \sup_{\theta \in \FF} \ep \left[ w \left( \frac{1}{\psi_n} d(\theta, \whth) \right) \right]  \geq w(A) \, \e_M. \label{eq:riskbd}
\ee
 Our main focus is to obtain a sharp and easily computable bound for $\e_M$.
A standard technique,  dating back to  Ibragimov and Khasminskii \cite{ibragimov}, is to bound $\e_M$  using Fano's inequality, which gives the bound \cite[Lemma 2.10]{tsybakov09Book}
\be 
\e_M \geq 1 - \frac{ \log 2 + \frac{1}{M} \sum_{i=1}^M D(P_{\theta_i} \| \ol{P}) }{\log M},  \label{eq:Fano}
\ee
where $\ol{P} := \frac{1}{M} \sum_{i=1}^M P_{\theta_i}$, and $D(P\|Q)$ is the  Kullback--Leibler (KL) divergence. To apply \eqref{eq:Fano}, one typically obtains a bound of the form 
\[  \frac{1}{M} \sum_{i=1}^M D(P_{\theta_i} \| \ol{P}) \leq \alpha \log M,  \]
for some constant $\alpha \in (0,1)$  (see  \cite[Section 2.7.1]{tsybakov09Book}). Then \eqref{eq:Fano} implies that $ \e_M \geq 1- \alpha - \frac{\log 2}{ \log M}$, which
 converges to $(1-\alpha)$  for large sample sizes $n$ (assuming $\log M \to \infty$ as $n \to \infty$), meaning that we deduce a weak converse,  and \eqref{eq:tocontrol0} gives a lower bound on the risk (via \eqref{eq:tsyb}). 

 The remainder of the paper is organized as follows. In Section \ref{sec:main_lb}, we derive a  lower bound on $\e_M$ (Theorem \ref{thm:main}) that  strengthens Fano's inequality.
 In Section \ref{subsec:rel_work}, we  discuss related prior work.  We then apply Theorem \ref{thm:main} to three high-dimensional estimation problems,  in each case showing  the average error probability $\e_M \to 1$ as $n \to \infty$ (strong converse).   In each case, our method replaces the Fano-based part of the argument which gives a weak converse.  In   Section \ref{sec:densityest}, we give a strong converse for a density estimation problem studied by Yu  \cite{BinYu}. In Section \ref{sec:active}, we obtain strengthened risk lower bounds for active learning of a binary classifier, following Castro and Nowak  \cite{castroNowak08}.
In Section \ref{sec:cs_bound},  we use Theorem \ref{thm:main} to  improve (by a factor of nearly 8) lower bounds of  Cand{\`e}s and Davenport \cite{candesDav13} for the minimax mean-squared error in compressed sensing.
\section{Lower bound on the Average Error Probability} \label{sec:main_lb}
We  bound the average error probability $\e_M$ in \eqref{eq:tocontrol0}  using a different \emph{binary} hypothesis testing problem. 
Adopting the formalism of \cite{ppv10}, we consider a random variable $S$ representing a message chosen uniformly at random
from  $\{ 1, \ldots, M \}$. The message $S$ is acted on by the simple encoder that generates codeword $\theta = \theta_S$, giving an induced distribution $\pi_\theta$  uniform over the set $\{ \theta_1, \ldots, \theta_M \}$.

We think of $\vc{Y} = (Y_1, \ldots, Y_n)$ as the output of a channel with input $\theta$.  Using arguments from \cite{ppv10, vazquez16}, we bound the desired average error probability of the optimal decoder \eqref{eq:tocontrol0} in terms of the Type I error probability of the following binary hypothesis testing problem:
\begin{align}
H_0: (\theta, \vc{Y}) & \sim  Q_{\theta \vc{Y}} := \pi_\theta Q_\vc{Y} \label{eq:nullhyp} \\
H_1: (\theta, \vc{Y}) & \sim  P_{\theta \vc{Y}} := \pi_\theta P_{\vc{Y}|\theta}, \label{eq:althyp}
\end{align}
for some probability distribution $Q_{\vc{Y}}$  that does not depend on $\theta$.    
In other words, we wish to determine whether $\theta$ and $\vc{Y}$ are independent, or are generated by the true underlying channel model.  We assume that the measure
 $Q_{\vc{Y}}$ dominates $P_{\vc{Y}|\theta_i}$ for $1 \leq i \leq M$, and hence the Radon-Nikodym derivative $\frac{d P_{\vc{Y}|\theta_i} }{d 
 Q_{\vc{Y}} }$ exists.
 
The space of  $\vc{Y}$ is denoted by  
$\mc{Y}$. Throughout the paper, we use boldface notation to denote vectors of length $n$.

\begin{thm} \label{thm:main}
 Let $\e_M$  denote the average error probability of any decoder over channel $P_{\vc{Y}|\theta}$,
 for a channel code with input distribution $\pi_\theta$  uniform  over the
$M$ codewords $\{ \theta_1, \ldots, \theta_M \}$. 
For any $\lambda >0$, and any distribution  $Q_{\vc{Y}}$  over $\mc{Y}$  such that  $P_{\vc{Y}|\theta_i}$ is absolutely continuous with respect to $Q_{\vc{Y}}$ for $1 \leq i \leq M$,
\be \label{eq:main}
\e_M \geq  1 -  \frac{(1+\lambda)} { \left( \lambda M \right)^{\frac{\lambda}{1+\lambda}} } 
 \left[ \sum_{i =1}^M  \frac{1}{M}  \exp\left( \lambda D_{1+\lambda}(P_{\vc{Y}|\theta_i}  \| Q_{\vc{Y}}) \right)\right]^{ \frac{1}{1+\lambda}}.
\ee
Here $D_{1+\lambda}(P_{\vc{Y}|\theta_i}  \| Q_{\vc{Y}})$ is the  R\'{e}nyi divergence of order $(1+\lambda)$ defined as
\be
D_{1+\lambda}(P_{\vc{Y}|\theta_i}  \| Q_{\vc{Y}}) := \frac{1}{\lambda}
 \log \left(  \int_{\mc{Y}} \left( \frac{d P_{\vc{Y}|\theta_i} }{d Q_{\vc{Y}}} \right)^{1+\lambda} dQ_{\vc{Y}}  \right).
 \label{eq:Renyi_div_def}
 \ee
\end{thm}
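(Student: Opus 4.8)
The plan is to follow the meta-converse philosophy of \cite{ppv10}: turn any $M$-ary decoder into a test for the binary problem \eqref{eq:nullhyp}--\eqref{eq:althyp}, and then estimate the relevant error by a change of measure combined with Markov's inequality applied to a power of the likelihood ratio; a one-dimensional optimization over the Markov threshold produces exactly the constant in \eqref{eq:main}.

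First I would fix an arbitrary decoder $\whi = \whi(\vc{Y})$ with average error $\e_M$ and write $S = S(\theta)$ for the index with $\theta = \theta_S$, which is well defined because the $\theta_i$ are distinct. Consider the event $\AAA = \{(\theta,\vc{Y}): \whi(\vc{Y}) = S(\theta)\}$ in the product space $\{\theta_1,\dots,\theta_M\}\times\mc{Y}$. Under $H_1$, i.e. $(\theta,\vc{Y})\sim P_{\theta\vc{Y}}=\pi_\theta P_{\vc{Y}|\theta}$, the event $\AAA$ is precisely the event that the decoder succeeds, so $P_{\theta\vc{Y}}(\AAA)=1-\e_M$; under $H_0$, i.e. $(\theta,\vc{Y})\sim Q_{\theta\vc{Y}}=\pi_\theta Q_{\vc{Y}}$, the message $\theta$ is uniform on $M$ atoms and independent of $\vc{Y}$, hence of $\whi(\vc{Y})$, so $Q_{\theta\vc{Y}}(\AAA)=1/M$ \emph{no matter what the decoder is}. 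This last observation is the whole source of the strong-converse leverage; after it the argument is mechanical, so the only place needing care is checking that the decoder-induced test is legitimate and that its Type-I and Type-II errors are indeed $1/M$ and $\e_M$.

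Next, let $L=dP_{\theta\vc{Y}}/dQ_{\theta\vc{Y}}$, well defined by the absolute-continuity hypothesis. For any threshold $\gamma>0$, splitting $\AAA$ according to whether $L\le\gamma$ gives
\[ P_{\theta\vc{Y}}(\AAA)=\int_{\AAA\cap\{L\le\gamma\}} L\,dQ_{\theta\vc{Y}}+P_{\theta\vc{Y}}\big(\AAA\cap\{L>\gamma\}\big)\le\gamma\,Q_{\theta\vc{Y}}(\AAA)+P_{\theta\vc{Y}}(L>\gamma)=\frac{\gamma}{M}+P_{\theta\vc{Y}}(L>\gamma). \]
For the tail I would apply Markov's inequality to $L^{\lambda}$: $P_{\theta\vc{Y}}(L>\gamma)=P_{\theta\vc{Y}}(L^{\lambda}>\gamma^{\lambda})\le\gamma^{-\lambda}\ep_{P_{\theta\vc{Y}}}[L^{\lambda}]=\gamma^{-\lambda}\int L^{1+\lambda}\,dQ_{\theta\vc{Y}}$, and by \eqref{eq:Renyi_div_def} applied to the pair $(P_{\theta\vc{Y}},Q_{\theta\vc{Y}})$ this last integral equals $\exp\!\big(\lambda D_{1+\lambda}(P_{\theta\vc{Y}}\|Q_{\theta\vc{Y}})\big)=:c$. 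Hence $1-\e_M\le \gamma/M+c\,\gamma^{-\lambda}$ for every $\gamma>0$; minimizing the right-hand side over $\gamma$ (the minimizer is $\gamma=(\lambda M c)^{1/(1+\lambda)}$, a routine calculus step) yields $1-\e_M\le \frac{1+\lambda}{(\lambda M)^{\lambda/(1+\lambda)}}\,c^{1/(1+\lambda)}$.

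Finally I would simplify $c$. Since $P_{\theta\vc{Y}}$ and $Q_{\theta\vc{Y}}$ share the same $\theta$-marginal $\pi_\theta$, that factor cancels in $L$, so $L(\theta,\vc{Y})=\big(dP_{\vc{Y}|\theta}/dQ_{\vc{Y}}\big)(\vc{Y})$ and therefore
\[ c=\frac{1}{M}\sum_{i=1}^M\int_{\mc{Y}}\Big(\frac{dP_{\vc{Y}|\theta_i}}{dQ_{\vc{Y}}}\Big)^{1+\lambda}dQ_{\vc{Y}}=\frac{1}{M}\sum_{i=1}^M\exp\!\big(\lambda D_{1+\lambda}(P_{\vc{Y}|\theta_i}\|Q_{\vc{Y}})\big). \]
Substituting this into the bound of the previous paragraph gives exactly \eqref{eq:main}, the inequality being vacuous whenever its right-hand side is negative. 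The content is thus concentrated in the reduction to the binary test and in identifying the joint Rényi divergence as an average of the per-hypothesis ones; the threshold optimization and the tail estimate are elementary.
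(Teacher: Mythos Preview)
Your proposal is correct and follows essentially the same route as the paper. The paper separates out your change-of-measure/threshold inequality as Lemma~\ref{lem:ch_hyp2} (derived via the $E_\gamma$-type bound $P[T=1]-\gamma Q[T=1]\le P[dP/dQ>\gamma]$ applied to the decoder-induced test), then applies the same Markov bound on $L^\lambda$ and optimizes over $\gamma$; your argument collapses these two steps into one but is otherwise identical in content and mechanics.
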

The proof  uses the following lemma, itself proved in Appendix \ref{sec:proofch_hyp2}.
\begin{lemma} 
With the assumptions and notation of Theorem \ref{thm:main} we have
  for any $\gamma >0$
 \be  \frac{1}{M} \geq \frac{1}{\gamma} \left(1- \e_M -  P_{\theta\vc{Y}} \left[ \frac{d P_{\vc{Y}|\theta} }{d Q_{\vc{Y}}} > \gamma \right]  \right).  \label{eq:M_lb2} \ee
\label{lem:ch_hyp2}
\end{lemma}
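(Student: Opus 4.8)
The plan is to turn the optimal $M$-ary decoder $i^*$ from \eqref{eq:tocontrol0} into a (sub-optimal) test for the binary problem \eqref{eq:nullhyp}--\eqref{eq:althyp} and run the change-of-measure argument of \cite{ppv10, vazquez16}. Let $D_i := \{\vc{y} \in \mc{Y} : i^*(\vc{y}) = i\}$ be the decoding regions; these form a measurable partition of $\mc{Y}$, so in particular $\sum_{i=1}^M Q_{\vc{Y}}(D_i) \le 1$ --- this inequality is where the factor $1/M$ in \eqref{eq:M_lb2} will ultimately come from. Since $\pi_\theta$ is the common $\theta$-marginal of $P_{\theta\vc{Y}}$ and $Q_{\theta\vc{Y}}$, and each $P_{\vc{Y}|\theta_i} \ll Q_{\vc{Y}}$ by assumption, the joint Radon--Nikodym derivative exists and satisfies $\frac{d P_{\theta\vc{Y}}}{d Q_{\theta\vc{Y}}}(\theta_i, \vc{y}) = \frac{d P_{\vc{Y}|\theta_i}}{d Q_{\vc{Y}}}(\vc{y})$.

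Next I would split the probability of correct decoding under $H_1$ according to the size of the likelihood ratio:
\[
1 - \e_M = P_{\theta\vc{Y}}\!\left[i^*(\vc{Y}) = S\right] = P_{\theta\vc{Y}}\!\left[i^*(\vc{Y}) = S,\ \tfrac{d P_{\vc{Y}|\theta}}{d Q_{\vc{Y}}} > \gamma\right] + P_{\theta\vc{Y}}\!\left[i^*(\vc{Y}) = S,\ \tfrac{d P_{\vc{Y}|\theta}}{d Q_{\vc{Y}}} \le \gamma\right].
\]
The first term is trivially at most $P_{\theta\vc{Y}}\!\left[\tfrac{d P_{\vc{Y}|\theta}}{d Q_{\vc{Y}}} > \gamma\right]$. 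For the second term, on the event $\{\tfrac{d P_{\vc{Y}|\theta}}{d Q_{\vc{Y}}} \le \gamma\}$ we have $d P_{\theta\vc{Y}} \le \gamma\, d Q_{\theta\vc{Y}}$, so
\[
P_{\theta\vc{Y}}\!\left[i^*(\vc{Y}) = S,\ \tfrac{d P_{\vc{Y}|\theta}}{d Q_{\vc{Y}}} \le \gamma\right] \le \gamma\, Q_{\theta\vc{Y}}\!\left[i^*(\vc{Y}) = S\right] = \frac{\gamma}{M}\sum_{i=1}^M Q_{\vc{Y}}(D_i) \le \frac{\gamma}{M},
\]
where the middle equality uses that $\theta$ and $\vc{Y}$ are independent under $Q_{\theta\vc{Y}}$ with $\theta$ uniform on $\{\theta_1,\dots,\theta_M\}$. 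Combining the two bounds gives $1 - \e_M \le P_{\theta\vc{Y}}\!\left[\tfrac{d P_{\vc{Y}|\theta}}{d Q_{\vc{Y}}} > \gamma\right] + \gamma/M$, and dividing by $\gamma$ and rearranging yields \eqref{eq:M_lb2}.

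The argument is essentially elementary once the setup is in place; the one point that needs care --- and the only real obstacle --- is the measure-theoretic bookkeeping: verifying that $\tfrac{d P_{\theta\vc{Y}}}{d Q_{\theta\vc{Y}}}$ is well defined and agrees with $\tfrac{d P_{\vc{Y}|\theta}}{d Q_{\vc{Y}}}$ (this is exactly where the absolute-continuity hypothesis is used), and that the change of measure $d P_{\theta\vc{Y}} \le \gamma\, d Q_{\theta\vc{Y}}$ may legitimately be applied after restricting to the sub-event $\{\tfrac{d P_{\vc{Y}|\theta}}{d Q_{\vc{Y}}} \le \gamma\}$. The combinatorial input --- that the decoding regions $D_i$ are disjoint, so that $\sum_i Q_{\vc{Y}}(D_i)\le 1$ --- is immediate.
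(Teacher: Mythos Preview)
Your proof is correct and is essentially the same argument as the paper's: the paper first isolates the $E_\gamma$-type inequality $P[T=1]-\gamma Q[T=1]\le P[\tfrac{dP}{dQ}>\gamma]$ as a separate lemma and then applies it with $T=\II(i^*(\vc{Y})=S)$, noting (via \cite[Theorem 26]{ppv10}) that $Q[T=1]=1-\e'_M=1/M$ since $\theta\perp\vc{Y}$ under $Q$; you simply carry out the same change-of-measure split inline and compute $Q_{\theta\vc{Y}}[i^*(\vc{Y})=S]=\tfrac{1}{M}\sum_i Q_{\vc{Y}}(D_i)\le 1/M$ directly. The only cosmetic point is that the lemma is stated for ``any decoder'', so your argument need not (and does not) use optimality of $i^*$.
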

%
\begin{proof}[Proof of Theorem \ref{thm:main}]
Writing $\II(\cdot)$ for the indicator function, the probability  in \eqref{eq:M_lb2} satisfies
\begin{align}
&  P_{\theta \vc{Y}} \left[\frac{d P_{\vc{Y}|\theta} }{d Q_{\vc{Y}}} > \gamma \right]    =     \int_{\mc{Y}} \sum_{i =1}^M \frac{1}{M} \,  \II( \theta = \theta_i) \, \frac{d P_{\vc{Y}|\theta} }{d Q_{\vc{Y}}} (\vc{y}) \,  \II \left[ \frac{d P_{\vc{Y}|\theta} }{d Q_{\vc{Y}}} (\vc{y}) > \gamma \right]  dQ_{\vc{Y}}(\vc{y}) \nonumber \\
& \quad  = \int_{\mc{Y}} \sum_{i =1}^M \frac{1}{M} \,   \frac{d P_{\vc{Y}|\theta_i} }{d Q_{\vc{Y}}} (\vc{y}) \, \II \left[ \frac{d P_{\vc{Y}|\theta_i} }{d Q_{\vc{Y}}} (\vc{y}) > \gamma \right]  dQ_{\vc{Y}}(\vc{y}) \nonumber  \\
& \quad \leq  \int_{\mc{Y}} \sum_{i =1}^M \frac{1}{M} \,    \frac{d P_{\vc{Y}|\theta_i} }{d Q_{\vc{Y}}} (\vc{y}) \,  \left(  \frac{1}{\gamma} 
\frac{d P_{\vc{Y}|\theta_i} }{d Q_{\vc{Y}}} (\vc{y}) \right)^{\lambda}  d Q_{\vc{Y}}(\vc{y}) \mbox{ \;\;\;\;\; for $\lambda > 0$.} \label{eq:markov}
\end{align}
Using this bound \eqref{eq:markov}  in Lemma \ref{lem:ch_hyp2}, we have
\be   
 \frac{1}{M} \geq \sup_{\gamma >0} \left[ \frac{1-\e_M}{\gamma}  - \frac{1}{\gamma^{1+\lambda}} \sum_{i =1}^M  \frac{1}{M}  \int_{\mc{Y}} \left( \frac{d P_{\vc{Y}|\theta_i} }{d Q_{\vc{Y}}} (\vc{y}) \right)^{1+\lambda} dQ_{\vc{Y}}(\vc{y}) \right].   
 \label{eq:sup_gam_bnd} \ee
Computing the maximum over $\gamma >0$ and rearranging, we get  \eqref{eq:main}. 
\end{proof}

\begin{rem}
As shown in the active learning example in Sec. \ref{sec:active}, one can use upper bounds  for the R\'{e}nyi divergence in \eqref{eq:Renyi_div_def}  to obtain lower bounds for $\e_M$. Such upper bounds can found, for example, in \cite{sasonV16,sasonV15upper}.
\end{rem}

\begin{rem}
In Appendix \ref{app:Fano}, we show how Fano's inequality  in \eqref{eq:Fano} can be obtained from the lower bound on Theorem \ref{thm:main}. Furthermore, the examples in the next three sections show that Theorem \ref{thm:main} yields strictly better lower bounds than the Fano-based approach.   
\end{rem}

\begin{rem}
 If we assume that each $P_{\vc{Y}|\theta_i}$ has a density $p_{\theta_i}(\vc{y})$ with respect to a common reference measure $\mu$, then the choice of $Q_{\vc{Y}}$ that maximizes the lower bound in \eqref{eq:main}  has the following density with respect to $\mu$\cite{sibson1969information,naki17augustin}:
\[  q^*(\vc{y}) = \frac{1}{C}\left( \sum_{i=1}^M \frac{1}{M} (p_{\theta_i}(\vc{y}))^{1+\lambda} \right)^{\frac{1}{1+\lambda}},  \]
where 
the normalizing constant $C =  \int_{\mc{Y}} \, \left( \sum_{i=1}^M \frac{1}{M} (p_{\theta_i}(\vc{y}))^{1+\lambda} \right)^{\frac{1}{1+\lambda}} d\mu(\vc{y})$.  However,  the bound in \eqref{eq:main} is generally not computable with this choice of $Q_{\vc{Y}}$.  As we will see in the following sections, the structure of the problem often suggests a natural choice for $Q_{\vc{Y}}$  that yields a computable lower bound. 
\label{rem:optQ}
\end{rem}

\subsection{Related work} \label{subsec:rel_work}

In \cite[Proposition 2.2]{tsybakov09Book}, Tsybakov gives a result similar to Lemma \ref{lem:ch_hyp2}. This result can then be  used to obtain a lower bound on $\e_M$ using the average pairwise $\chi^2$-distance between $q$ and the elements of the packing set \cite[Theorem 2.6]{tsybakov09Book}. This bound is similar to the one  obtained by using $\lambda=1$ in  Theorem \ref{thm:main}. In this paper, we show that via two examples (active learning and compressed sensing) that Theorem \ref{thm:main} can be applied with a general $\lambda >0$ to obtain stronger non-asymptotic bounds. Furthermore, as $n \to \infty$, Theorem \ref{thm:main} gives a strong converse ($\e_M \to 1$), unlike  Fano's inequality.

Birg\'{e}  \cite{birge2005} gives stronger, but less transparent, bounds than Fano's inequality using Fano-type arguments; again, $\e_M$ is bounded in terms of an average of Kullback--Leibler divergences, but these are used as the argument for a function, rather than directly substituted. Sason and Verd\'u \cite[Section 3]{sasonV18arimoto} recently derived a generalized Fano's inequality in terms of the Arimoto-R\'enyi conditional entropy. They also obtained upper bounds on $\e_M$ in terms of the pairwise R\'enyi divergences \cite[Section 4]{sasonV18arimoto}.

Note that an alternative approach to hypothesis testing bounds, avoiding the use of  Fano's inequality, is given by Assouad \cite{assouad1983}. Indeed,  \cite{BinYu} makes a detailed comparison between Fano-based bounds and those coming from Assouad's Lemma \cite{assouad1983}, finding little practical difference. Indeed \cite{BinYu} quotes  Birg\'{e} \cite[p. 279]{birge1986}: ``[Fano] is in a sense more general because it applies in more general situations. It could also replace Assouad's Lemma in almost any practical case \ldots ''.

Using Fano's inequality, Yang and Barron \cite{YangBarron99}  obtained order-optimal minimax risk lower bounds  that depend  only on global metric entropy features of the underlying function class, without explicitly constructing a packing set. The required metric entropy features (bounds on the packing number and covering number) are available from results in approximation theory for many function classes of interest. 
Guntuboyina \cite{guntu11}  obtained a lower bound on the average error probability in terms of general $f$-divergences, and also generalized the metric entropy results of Yang and Barron
\cite{YangBarron99} to certain $f$-divergences such as the $\chi^2$-divergence.
An interesting direction for future work would be to obtain a  non-asymptotic result analogous to Theorem \ref{thm:main}  for the case where only the global metric entropy features are available.

An important historical remark is that Hayashi and Nagaoka \cite{hayashi2} first linked channel coding and binary hypothesis testing, 
 with later work \cite{hayashi3} by Hayashi 
clarifying this approach and Nagaoka \cite{nagaoka} using similar ideas to derive strong converse results. The recent work by Vazquez-Vilar et al.  \cite{vazquez16} also provides  results characterizing the average error probability of channel coding in terms of the Type I error of a binary hypothesis test.
This link with channel coding has been used in other contexts to prove strong converse results, 
including \cite{johnson-ppv}, which derived strong converse results for the  group testing problem.

\section{Application to density estimation} \label{sec:densityest}
For the remainder of this paper, we show how Theorem \ref{thm:main} can be applied to a number of high-dimensional estimation problems. In this section
we apply Theorem \ref{thm:main} to the following  density estimation problem taken from  Yu \cite[Example 2, P.431]{BinYu}. 
Let $\mc{F}$ be the class of smooth densities on  $[0,1]$ such that for any density $\theta \in \mc{F}$, we have
\begin{align*}
& \int_{0}^1 \theta(x) dx =1, \qquad  a_0 \leq \theta(x) \leq a_1 < \infty, \qquad  \abs{\theta^{\prime \prime}(x)} \leq a_2, \quad x \in \mathbb{R},
\end{align*}
for some positive constants $a_0, a_1, a_2$.  The goal is to estimate the density $\theta$ from $\vc{Y} = (Y_1, \ldots, Y_n)$, where
$\{ Y_i \}$ are generated  IID from $\theta$.  We want to bound from below the risk of any estimator $\whth_n = \whth_n(\vc{Y})$, where the loss is measured using squared Hellinger distance, i.e., 
\begin{equation} \label{eq:hell}
 d(\theta, \whth_n) = \int_{0}^1 \left( \sqrt{\theta(x)} - \sqrt{\whth_n(x)} \right)^2 dx.
\end{equation}

The packing set in \cite{BinYu} is constructed via a hypercube class of densities defined via small perturbations of the uniform density on
$[0,1]$.
Fix a smooth, bounded function $g(x)$ with
\begin{equation}
\int_0^1 g(x) dx  =  0 \mbox{ \;\;\; and \;\;\;}
\int_0^1 \left( g(x) \right)^2 dx  =  a.  \label{eq:gprop2} 
\end{equation}
We partition the unit interval $[0,1]$ into $m$  subintervals of length $1/m$,
and perturb the uniform density on each subinterval by a small amount, proportional to a version of $g$
rescaled and translated to lie on that subinterval. That is, for some sufficiently small fixed constant $c$,
 we can define the functions
\begin{equation} \label{eq:gjdef}
 g_j(x) = \frac{c}{m^2} g(m x - j) \II \left( \frac{j}{m} \leq x < \frac{j+1}{m} \right),
\mbox{\;\;\; for $j =0, \ldots, m-1$ }.\end{equation}
 Considering perturbations of the uniform density by $\pm \{ g_j \}$, define  the following hypercube class of joint densities indexed by $\vd{\tau} = 
\left( \tau_1, \ldots, \tau_m \right) \in \{ \pm 1 \}^m$:
\begin{equation}
\MM_m 
= \left\{ 
\ftau(y) = 1 + \sum_{j=0}^{m-1} \tau_j g_j(y) 
\right\}.
\label{eq:ftau_def}
\end{equation}
The bandwidth parameter $m$ will be chosen later as an increasing function of $n$, to optimize the risk lower bound.

\begin{rem}[Packing set construction] \cite[Lemma 4]{BinYu}  \label{rem:binyuset}
There exists  a subset $\AAA \subseteq \{ -1, 1 \}^m$ of size $M \geq \exp(c_0 m)$, where $c_0 \simeq 0.082$, whose elements have minimum pairwise
Hamming distance at least $m/3$. It is then shown in \cite{BinYu} that this results in a packing set  of densities 
$$\packval{\frac{ac^2}{3m^4}}
= \{ \ftau: \vd{\tau} \in \AAA \} \subseteq \MM_m,$$
where $ac^2/(3m^4)$
is a lower bound on the squared Hellinger distance (see \eqref{eq:hell}) between distinct densities  in the packing set.  Here $a$ is defined in \eqref{eq:gprop2},  and $c$ is defined in \eqref{eq:gjdef}. We use exactly this packing set $\packval{ac^2/(3m^4)}$ as the set of $M$ codewords
$\{ \theta_1, \ldots, \theta_M \}$ in Theorem \ref{thm:main}.
\end{rem}

We now use Theorem \ref{thm:main} to bound the risk. To do this, 
we first state an explicit bound (to be proved in Appendix \ref{sec:prooflemexmplicit}) on the bracketed term in \eqref{eq:main}, for $\lambda =1$.

\begin{lemma} \label{lem:explicit}  Taking $Q_{\vc{Y}}$ to be the uniform measure on $[0,1]^n$ and identifying each $\frac{dP_{\vc{Y}|\theta_i}}{d Q_{\vc{Y}}}$ with a density $\ftau^n(\vc{y}) =  \prod_{i=1}^n f_{\vd{\tau}}(y_i)$ for $\tau \in \AAA$,
with $\lambda=1$, the bracketed term in \eqref{eq:main} becomes:
\begin{align*}
 \left[ \sum_{i =1}^M  \frac{1}{M} 
 \int_{\mc{Y}} \left( \frac{dP_{\vc{Y}|\theta_i} }{dQ_{\vc{Y}}}(\vc{y}) \right)^{2} dQ_{\vc{Y}}(\vc{y}) \right]^{ \frac{1}{2}}
&  = \left[ \sum_{\vd{\tau} \in \AAA} \frac{1}{M} \int_{[0,1]^n}  \ftau^n(\vc{y})^2  d\vc{y} \right]^{\frac{1}{2}} \\
& \leq 
 \exp \left( \frac{ c^2 a n}{2 m^4} \right) .
 \end{align*}
\end{lemma}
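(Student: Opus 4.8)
The plan is to exploit the product form of $\ftau^n$ together with the fact that the bump functions $g_0,\dots,g_{m-1}$ have pairwise disjoint supports, which collapses the computation to a single one-dimensional integral.

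First I would observe that since $Q_{\vc{Y}}$ is Lebesgue measure on $[0,1]^n$ and, by hypothesis, $\frac{dP_{\vc{Y}|\theta_i}}{dQ_{\vc{Y}}}=\ftau^n(\vc{y})=\prod_{i=1}^n \ftau(y_i)$ (a legitimate density once $c$ is small enough for $\ftau$ to be nonnegative), the integral factorizes:
\[
\int_{[0,1]^n}\ftau^n(\vc{y})^2\,d\vc{y}=\left(\int_0^1 \ftau(y)^2\,dy\right)^{\!n}.
\]
Thus it suffices to show $\int_0^1 \ftau(y)^2\,dy = 1 + c^2 a/m^4$ for every $\vd{\tau}\in\{\pm1\}^m$; since this value does not depend on $\vd{\tau}$, the average over $\vd{\tau}\in\AAA$ in \eqref{eq:main} is trivial, and the elementary inequality $1+x\le e^x$ converts $(1+c^2a/m^4)^{n/2}$ into the claimed bound $\exp(c^2an/(2m^4))$.

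Next I would expand $\ftau(y)^2=\bigl(1+\sum_{j=0}^{m-1}\tau_j g_j(y)\bigr)^2$. The key point is that $g_j$ is supported on $[j/m,(j+1)/m)$ by \eqref{eq:gjdef}, so for $j\ne j'$ the product $g_j(y)g_{j'}(y)$ vanishes identically, and using $\tau_j^2=1$ we get
\[
\int_0^1 \ftau(y)^2\,dy = 1 + 2\sum_{j=0}^{m-1}\tau_j\int_0^1 g_j(y)\,dy + \sum_{j=0}^{m-1}\int_0^1 g_j(y)^2\,dy.
\]
For each $j$, the substitution $u=my-j$ together with \eqref{eq:gjdef} gives $\int_0^1 g_j(y)\,dy=\frac{c}{m^3}\int_0^1 g(u)\,du=0$ and $\int_0^1 g_j(y)^2\,dy=\frac{c^2}{m^5}\int_0^1 g(u)^2\,du=\frac{c^2a}{m^5}$, using the two properties in \eqref{eq:gprop2}. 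Summing the $m$ identical quadratic terms yields $\int_0^1 \ftau(y)^2\,dy=1+c^2a/m^4$, and combining with the factorization above completes the proof.

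I do not anticipate a genuine obstacle: the lemma is a direct calculation. The only spots needing care are keeping track of the Jacobian factor $m^{-1}$ in the change of variables (so that the $c/m^2$ scaling in \eqref{eq:gjdef} correctly produces the $m^{-4}$ in the final bound) and explicitly using the disjointness of the supports of the $g_j$ to discard the cross terms — the same structural feature that underlies the Hamming-distance packing-set construction in Remark~\ref{rem:binyuset}.
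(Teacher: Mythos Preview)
Your proposal is correct and follows essentially the same approach as the paper: factorize the $n$-dimensional integral into a product, expand $\ftau(y)^2$, use the disjoint supports of the $g_j$ and the change of variables $u=my-j$ to obtain $\int_0^1\ftau(y)^2\,dy=1+c^2a/m^4$, then apply $1+x\le e^x$. The only cosmetic difference is that the paper bounds $(1+c^2a/m^4)^n$ and then takes the square root, whereas you bound $(1+c^2a/m^4)^{n/2}$ directly.
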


Combining Lemma \ref{lem:explicit}  with Theorem \ref{thm:main},  we deduce the following lower bound.
\begin{prop} \label{prop:binyumain}
For any positive constant  $\nu < (c_0/(c^2 a))^{1/5}$,  the risk of any estimator $\whth_n$ satisfies
\begin{equation} \label{eq:riskbd2}
 \sup_{\theta \in \FF} \expec d(\whth_n, \theta)    \geq  \frac{c^2 a \nu^4 }{6} \,  n^{-4/5}\, \e_M,
\end{equation}
where
\begin{equation}
\e_M \geq
1- 2 \exp \left( - \frac{n^{1/5}}{2 \nu} \left( c_0 - \nu^5 c^2 a \right) \right). \label{eq:n5bound}
\end{equation}
Therefore, for large $n$ we have
\be
 \sup_{\theta \in \FF} \expec d(\whth_n, \theta)    \geq   \frac{c_0^{4/5} (c^2 a)^{1/5}}{6} n^{-4/5}(1-o(1)).
 \label{eq:den_bnd_asymp}
\ee
\end{prop}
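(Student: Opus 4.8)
The plan is to feed the packing set of Remark \ref{rem:binyuset} into Theorem \ref{thm:main} with $\lambda = 1$ and the reference measure $Q_{\vc{Y}}$ equal to the uniform distribution on $[0,1]^n$, then track constants carefully while choosing the bandwidth $m$ as a function of $n$. First I would set $m = \lceil \nu n^{1/5} \rceil$ (treating $m \asymp \nu n^{1/5}$; the ceiling contributes only lower-order corrections), so that $M \geq \exp(c_0 m) \geq \exp(c_0 \nu n^{1/5})$ and, since the squared Hellinger separation of the packing set is $d_{\min} = ac^2/(3m^4)$, the rate parameter satisfies $\psi_n \asymp d_{\min} \asymp ac^2/(3\nu^4 n^{4/5})$. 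This is what produces the $n^{-4/5}$ factor and the $c^2 a \nu^4/6$ prefactor in \eqref{eq:riskbd2} (the factor $6 = 2\cdot 3$ comes from the $d_{\min}/2$ in the minimum-distance decoding step combined with the $3$ in the Hellinger bound).

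Next I would invoke Lemma \ref{lem:explicit}, which with $\lambda = 1$ bounds the bracketed term in \eqref{eq:main} by $\exp(c^2 a n / (2 m^4))$. Plugging this and $M \geq \exp(c_0 m)$ into \eqref{eq:main} with $\lambda = 1$ gives
\[
\e_M \geq 1 - \frac{2}{\sqrt{M}}\exp\!\left(\frac{c^2 a n}{2 m^4}\right) \geq 1 - 2\exp\!\left(\frac{c^2 a n}{2 m^4} - \frac{c_0 m}{2}\right).
\]
With $m \approx \nu n^{1/5}$, the exponent becomes $\tfrac{1}{2}(c^2 a n /(\nu^4 n^{4/5}) - c_0 \nu n^{1/5}) = -\tfrac{n^{1/5}}{2\nu}(c_0 - \nu^5 c^2 a)$, which is \eqref{eq:n5bound}; the hypothesis $\nu < (c_0/(c^2 a))^{1/5}$ is exactly the condition making $c_0 - \nu^5 c^2 a > 0$, so that $\e_M \to 1$ as $n \to \infty$. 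Then \eqref{eq:riskbd} (with $w$ the identity, $A$ chosen so that $A\psi_n = d_{\min}/2$, i.e. effectively $w(A)\e_M$ collapses to $(d_{\min}/2)\e_M = \tfrac{ac^2}{6m^4}\e_M$) yields \eqref{eq:riskbd2}. Finally, optimizing the leading constant $c^2 a \nu^4 \e_M$ over admissible $\nu$: since $\e_M = 1 - o(1)$ uniformly on compact $\nu$-intervals bounded away from the endpoint, the supremum of $\nu^4$ is approached as $\nu \uparrow (c_0/(c^2 a))^{1/5}$, giving $\nu^4 \to (c_0/(c^2 a))^{4/5}$ and hence $c^2 a \nu^4/6 \to c_0^{4/5}(c^2 a)^{1/5}/6$, which is \eqref{eq:den_bnd_asymp}.

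The main obstacle is bookkeeping rather than any deep step: one must be careful that the integer constraint on $m$ and on $M$ does not disturb the stated constants, that the minimum-distance-decoding reduction \eqref{eq:tocontrol0}--\eqref{eq:riskbd} is correctly instantiated with $d_{\min} = 2A\psi_n$ so the factor $1/2$ lands in the right place, and that the $o(1)$ in \eqref{eq:den_bnd_asymp} genuinely absorbs both the $\nu \uparrow \nu_{\max}$ limit and the $\e_M \to 1$ convergence simultaneously. I would also double-check that Lemma \ref{lem:explicit}'s bound $\exp(c^2 a n/(2m^4))$ is used with the same normalization of $c$ (the "sufficiently small" constant from \eqref{eq:gjdef}) as in the packing-set separation, since both quantities depend on $c$ and the final constant is only meaningful if they are consistent.
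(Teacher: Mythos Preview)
Your overall strategy is exactly the paper's: apply Theorem \ref{thm:main} with $\lambda=1$, use Lemma \ref{lem:explicit} to bound the bracketed term, combine with $M\geq e^{c_0 m}$, and then substitute a bandwidth $m$ proportional to $n^{1/5}$. However, there is a concrete substitution error that makes your intermediate computations fail to reproduce \eqref{eq:riskbd2} and \eqref{eq:n5bound}.

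You set $m \approx \nu n^{1/5}$, but the paper takes $m = n^{1/5}/\nu$. With your choice, the risk prefactor $\tfrac{ac^2}{6m^4}$ becomes $\tfrac{c^2 a}{6\nu^4}\,n^{-4/5}$, not $\tfrac{c^2 a\,\nu^4}{6}\,n^{-4/5}$ as in \eqref{eq:riskbd2}. Likewise, your exponent computation
\[
\tfrac12\!\left(\frac{c^2 a n}{\nu^4 n^{4/5}} - c_0\,\nu\, n^{1/5}\right)
\;=\;
\frac{n^{1/5}}{2\nu^4}\bigl(c^2 a - c_0\,\nu^5\bigr)
\]
does \emph{not} equal $-\tfrac{n^{1/5}}{2\nu}(c_0-\nu^5 c^2 a)$ as you claim; the two expressions differ unless $\nu=1$. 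With the correct choice $m=n^{1/5}/\nu$ one gets $\tfrac{c_0 m}{2}=\tfrac{c_0 n^{1/5}}{2\nu}$ and $\tfrac{c^2 a n}{2m^4}=\tfrac{c^2 a\,\nu^4 n^{1/5}}{2}$, and their difference is precisely $-\tfrac{n^{1/5}}{2\nu}(c_0-\nu^5 c^2 a)$. Note also that the admissibility condition flips: with your $m$ the strong-converse condition would be $\nu > (c^2 a/c_0)^{1/5}$, not $\nu < (c_0/(c^2 a))^{1/5}$. Once you replace $m=\nu n^{1/5}$ by $m=n^{1/5}/\nu$ throughout, the rest of your argument (including the $\nu\uparrow(c_0/(c^2 a))^{1/5}$ optimization for \eqref{eq:den_bnd_asymp}) matches the paper's proof.
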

\begin{proof} We apply Theorem \ref{thm:main} by  setting the minimum distance $ac^2/(3m^4)$  of the packing set  in Remark \ref{rem:binyuset} to  $2 A \psi_n$. Taking $A=1$, we obtain  $\psi_n = c^2 a /(6 m^{4})$.  
Taking $w$ to be the identity in \eqref{eq:riskbd}, we deduce
\[ \max_j \ \expec d(\hat{\theta}, \theta_j)  \geq \psi_n \e_M   = \frac{c^2 a }{6 m^{4}} \, \e_M.  \]
Taking $\lambda=1$ in Theorem \ref{thm:main} and using Lemma \ref{lem:explicit}, we  bound $\epsilon_M$ as
\begin{align*}
\epsilon_M & \geq  1- \frac{2}{\sqrt{M}} \exp \left( \frac{ c^2 a n}{2 m^4} \right) \stackrel{(a)}{\geq}  1 - 2  \exp \left( - \frac{ c_0 m}{2} \right)  \exp \left( \frac{ c^2 a n}{2 m^4} \right),
\end{align*}
where $(a)$ is obtained using the fact that the packing set of densities $\packval{ac^2/(3m^4)}$ has size $M \geq \exp(c_0 m)$, as described in Remark \ref{rem:binyuset} above. We therefore have 
\be \label{eq:binyum_bnd}
\max_j \ \expec d(\hat{\theta}, \theta_j)  \geq \frac{c^2 a }{6 m^{4}} \left[ 1 - 2  \exp \left( - \frac{m}{2} \left( c_0  -  \frac{ c^2 a n}{ m^5} \right) \right)  \right].
\ee
The result \eqref{eq:riskbd2} follows by taking $m = n^{1/5}/\nu$.

To obtain the asymptotic bound in \eqref{eq:den_bnd_asymp}, we take  $\nu$ to approach $ (c_0/(c^2 a))^{1/5}$ as  $n \rightarrow
\infty$, but slowly enough that ensure that the exponent on  the RHS of \eqref{eq:n5bound} is negative so that $\e_M$ tends to 1. For example, we can take $ \nu = \left(\frac{c_0}{c^2 a} \right)^{1/5}(1- n^{-1/\kappa})$ for  $\kappa >25$. 
\end{proof}

\begin{rem}
 The paper \cite{BinYu}  derives Fano-type bounds in this setting: combining Lemmas 3 and 5 of \cite{BinYu} and taking taking $m = n^{1/5}/\nu$ gives the same bound as \eqref{eq:riskbd2}, but with a looser lower bound on $\e_M$ given by 
 \begin{align}
 \e_M &  \geq \left( 1 - \frac{1}{c_0}\left( \frac{2  c^2 a  \nu^5}{(1- c_g)} + \frac{\log 2}{n^{1/5}} \right)   \right).
\label{eq:binyufano}
\end{align}
 For the  bound \eqref{eq:binyufano} to be meaningful, we need $\nu < \left( \frac{c_0}{c^2 a} \frac{1-c_g}{2} \right)^{1/5}$, where $c_g = c \sup_x \abs{g(x)}$. The scaling factor $c$ has to be chosen so that $c_g <1$.  

Thus  Proposition \ref{prop:binyumain} provides a strong converse
(error probability tending to $1$), whereas the result \eqref{eq:binyufano} extracted from \cite{BinYu} gives a weak converse
(error probability bounded away from $0$).   Our bound also offers greater flexibility in choosing $\nu$ and removes the need to control $c_g$.
\end{rem}
\begin{rem}
Theorem \ref{thm:main} can similarly be applied to obtain strong converses for estimating densities belonging to either H{\"o}lder or Sobolev classes, strengthening the risk lower bounds described in \cite[Sec. 2.6.1]{tsybakov09Book}
\end{rem}

\section{Application to active learning of a classifier} \label{sec:active}

In this section, we use Theorem \ref{thm:main} to derive strengthened minimax lower bounds for active learning algorithms for a family of classification problems introduced by
Castro and Nowak \cite{castroNowak08} (see also \cite{tsybakov04}). We use the explicit packing set construction of \cite{castroNowak08},
but modify their notation for consistency.

Consider data of the form $\vc{Y} = (\vc{U}, \vc{V}) = \left( (U_1, V_1), \ldots, (U_n, V_n) \right)$. Each pair $(U_r, V_r)$ consists of a 
feature vector $U_r \in \re^d$ (where we assume $d \geq 2$) and a binary label $V_r \in \{ 0, 1 \}$, and is drawn independently from an underlying joint distribution
$P_{UV} = P_U P_{V|U}$.  The goal of classification is to predict the value of label $V$, given a future $U$ observation. This is done via  $G$, a measurable subset of $\re^d$. Given a $U \in \mathbb{R}^d$, the classifier estimates its label as $\whv := \whv(U) := \II( U \in G)$. The risk of a classifier is the probability of classification error, given by
$$ R(G) = \pr( \whv \neq V) = \pr( \II(U \in G) \neq V),$$
where $(U,V) \sim P_{UV}$.
It is well-known (see \cite{tsybakov04}) that, given knowledge of $P_{UV}$, the Bayes-optimal classifier is  
$$ G^* = \{u \in \mathbb{R}^d: \eta(u) \geq 1/2 \}, $$
where the feature conditional probability $\eta(u) = P_{V|U}(1|u)$.
As  $P_{UV}$ is unknown, our goal is to estimate $G^*$ from data $\vc{Y}$. 
The performance of classifier $\whG_n$ is measured by  \emph{excess risk}  (or regret) \cite[Eq. (1)]{castroNowak08}
$$R(\whG_n)  - R(G^*) = \int_{\whG_n \Delta G^*} \abs{2 \eta(u) - 1} dP_U(u),$$
where $\Delta$ represents the symmetric difference between sets.
For the remainder of this section, as in \cite{castroNowak08}, we will assume that $P_U$ is supported on $[0,1]^d$. 
It is clear that the difficulty of a classification problem will depend on both the shape of the  boundary of $G^*$ and the behaviour of $(2\eta(u) - 1)$ for $u$ close to this boundary. We consider the class of joint distribution functions $\mathsf{BF}(\alpha, \kappa, L, c)$,
defined formally in \cite[Section IV]{castroNowak08}. For our purposes it suffices to understand this class as a set of distributions $P_{UV}$ such that: 
\begin{enumerate}
\item The boundary of $G^*$ can be expressed as an $\alpha$-H\"{o}lder smooth function with constant $L$.
\item The value of $|\eta(u) - 1/2| $ is at least $c D^{\kappa - 1}$ for points $u$ at distance $D$ from the boundary, where $\kappa \geq 1$.
\end{enumerate}

Algorithms that  attempt to learn the Bayes-optimal classifier $G^*$ from data are categorized as passive or active. \emph{Passive} learning algorithms aim to learn $G^*$ from a pre-specified, possibly random, choice of $(U_1, \ldots, U_n)$ and the corresponding labels $(V_1, \ldots, V_n)$.
In contrast, \emph{active} learning algorithms  choose each $U_r$ based on previous values $(U,V)_r^- := \left( U_1, \ldots, U_{r-1}, V_1, \ldots, V_{r-1} \right)$.
This allows us to adaptively probe the boundary of $G^*$.  
A (randomized) active learning algorithm is defined by a sequence of conditional distributions  $P_{U_r | (U,V)_r^-} := P_{U_r | \left( U_1, \ldots, U_{r-1}, V_1, \ldots, V_{r-1} \right)}$, which defines the joint distribution as follows:
\begin{equation}  P_{\vc{U} \vc{V}} 
:= \prod_{r=1}^n P_{U_r | (U,V)_r^-}  P_{V_r|U_r} 
\end{equation}
where  $P_{V_r|U_r}  \equiv P_{V|U}$; in particular, conditioned on $U_r$, label $V_r$ is  independent of $(U_1, \ldots, U_{r-1})$. We assume that  for each $r$, the conditional distribution
$P_{U_r | (U,V)_r^-}$ has a density $p_{U_r | (U,V)_r^-}$ with respect to Lebesgue measure on $[0,1]^d$. Note that active learning algorithms correspond to channel coding with feedback, and to adaptive group testing algorithms  \cite{johnson-ppv}. Passive learning corresponds to channel coding without feedback, and to non-adaptive group testing algorithms.

We provide lower bounds on the excess risk of active learning algorithms that strengthen those in \cite[Theorem 3]{castroNowak08}, but our techniques can also be applied to  \cite[Theorem 4]{castroNowak08}, which applies in the passive case. We use the packing set  constructed in \cite{castroNowak08},  which is defined via a hypercube class of joint distributions on $(U,V)$. Fix an integer $m$ (to be chosen later as a function of $n$).
For each vector $\vd{\tau} \in  \{0,1 \}^{m^{d-1}}$, Castro and Nowak \cite[Appendix C]{castroNowak08} construct a unique distribution of $(U,V)$ whose feature conditional probability
is denoted by $\eta_{\vd{\tau}}(u)$, and the corresponding Bayes classifier  is denoted by $G^*_{\vd{\tau}}$. We denote this hypercube class of  $2^{m^{d-1}}$  distributions by $\FF_m$. Each distribution in $\FF_m$ has the same $U$-marginal $P_U$. Thus the joint distribution is determined by the  conditional distribution $P_{V|U}$. The conditional distributions  in $\FF_m$ (equivalently, the feature conditional probabilities $\eta_{\vd{\tau}}(u)$ for each $\vd{\tau} \in  \{0,1 \}^{m^{d-1}}$) are not explicitly defined here, but the definition can be found in the displayed equation at the foot of \cite[p.2350]{castroNowak08}.  The definition ensures that the hypercube class $\FF_m \subseteq \mathsf{BF}(\alpha, \kappa, L, c)$. 

The packing set defined in \cite[Appendix C]{castroNowak08} is a subset of distributions in $\FF_m$.
\begin{rem}[Packing set construction] \cite[Lemma 2]{castroNowak08} \label{rem:castronowakset}
There exists a subset $\AAA \subseteq \{0,1 \}^{m^{d-1}}$ of size $M +1$ with $ M \geq 2^{m^{d-1}/8}$, whose elements have minimum pairwise Hamming distance at least $m^{d-1}/8$. It is then shown in \cite{castroNowak08} that this results in a packing set of  functions
 $\packvalx{\beta_m/8} = \{ \eta_{\vd{\tau}}: \vd{\tau} \in \AAA \}$, where $\beta_m=  L H m^{-\alpha}$, and  $\eta_{\vd{\tau}}(1|u) = P_{V|U}(u)$. (Hence $1-\eta_{\vd{\tau}}(u) = P_{V|U}(0|u)$.)
Here $\beta_m/8$ is a lower bound on the set distance between distinct elements of $\packvalx{\beta_m/8}$,
defined as 
\be d_\Delta( G^*_{\vd{\tau}}, G^*_{\vd{\tau'}}) =
\int \II \left( u \in  \left( G^*_{\vd{\tau}} \Delta G^*_{\vd{\tau}'} \right) \right) du, \label{eq:set_dist_def} \ee
and  $H = \norm{h}_1$ is the norm of a suitable smooth function $h$.

Furthermore,  $\packvalx{\beta_m/8}$ contains the function $\eta_{\vd{0}}$,  corresponding to the point ${\vd{\tau}} = (0, 0, \ldots, 0)$ in the hypercube. We use the other $M$ functions in the packing set $\packvalx{\beta_m/8}$ to act as the $M$ codewords $\{ \theta_1, \ldots, \theta_M \}$ in Theorem \ref{thm:main}. 
The Bayes classifiers corresponding to these codewords are denoted by $G_1^*, \ldots, G_M^*$.
\end{rem}

As in Section \ref{sec:densityest}, we prove an explicit bound on the bracketed term \eqref{eq:main} in Theorem \ref{thm:main}, with  
$(\vc{u}, \vc{v})$ corresponding to $\vc{y}$ in  \eqref{eq:main}.

\begin{lemma} \label{lem:classmain} 
For an  active learning algorithm described by $\prod_{r=1}^n P(U_r | (U,V)_r^-)$, we take
 $Q_{\vc{U},\vd{V}}(\vd{U}, \vd{V}) := \prod_{r=1}^n P(U_r | (U,V)_r^-) \prod_{r=1}^n P_{\vc{0}}( V_r |U_r)$, where $P_{\vc{0}}( V_r |U_r)$ is the conditional probability mass function
 determined by $\eta_{\vc{0}}$ which 
 corresponds to the point ${\vd{\tau}} = (0, 0, \ldots, 0)$ in the hypercube.
 
   Further, for each $\vd{\tau} \in \AAA $ and $\vd{\tau} \neq \vd{0}$,   we can take
    \[ P_{\vd{\tau}}(\vc{U},\vc{V}) := \prod_{r=1}^n P(U_r | (U,V)_r^-)  \prod_{r=1}^n P_{\vd{\tau}}( V_r |U_r),\] where $P_{\vd{\tau}}(V_r|U_r)$ is the conditional probability mass function determined by $\eta_{\vd{\tau}}$. Then,   for any $\lambda \in (0,1]$,  the bracketed term in \eqref{eq:main} satisfies
\begin{equation}
\sum_{\vd{\tau} \in \AAA, \vd{\tau} \neq \vd{0}} \frac{1}{M} \int_{\mc{Y}_n }  
\left( \frac{  dP_{\vd{\tau} } }{ dQ_{\vc{U},\vc{V}}} (\vc{u}, \vc{v}) \right)^{1+\lambda} dQ_{\vc{U},\vc{V}}(\vc{u}, \vc{v})  
\leq    \exp \left( \frac{16  c^2  \beta_m^{2(\kappa-1)} \lambda n }{(1 - 2c \beta_m) } \right). \label{eq:classmain} 
 \end{equation}
where for brevity we write an integral to represent integration and summation over the product space
$\mc{Y}_n = [0,1]^{d \times n} \otimes \{ 0, 1 \}^n$, and $\beta_m=LHm^{-\alpha}$ as defined in Remark \ref{rem:castronowakset}.
 \end{lemma}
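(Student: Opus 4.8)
The plan is to bound the ratio $\frac{dP_{\vd{\tau}}}{dQ_{\vc{U},\vc{V}}}$ explicitly by exploiting the telescoping structure of the sequential (adaptive) construction. Since both $P_{\vd{\tau}}$ and $Q_{\vc{U},\vc{V}}$ share the same feature-selection kernels $\prod_r P(U_r \mid (U,V)_r^-)$, these cancel in the Radon--Nikodym derivative, leaving $\frac{dP_{\vd{\tau}}}{dQ_{\vc{U},\vc{V}}}(\vc{u},\vc{v}) = \prod_{r=1}^n \frac{P_{\vd{\tau}}(v_r \mid u_r)}{P_{\vc{0}}(v_r \mid u_r)}$. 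Hence the integral in \eqref{eq:classmain} becomes an iterated expectation that can be peeled off one coordinate at a time: conditioning on the past $(U,V)_r^-$, the inner expectation over $(U_r, V_r)$ under $Q$ of $\bigl(\frac{P_{\vd{\tau}}(V_r\mid U_r)}{P_{\vc{0}}(V_r\mid U_r)}\bigr)^{1+\lambda}$ is exactly $\int p_{U_r\mid (U,V)_r^-}(u)\,\sum_{v\in\{0,1\}} P_{\vc{0}}(v\mid u)\bigl(\frac{P_{\vd{\tau}}(v\mid u)}{P_{\vc{0}}(v\mid u)}\bigr)^{1+\lambda}\,du$, which is an expectation (over $U_r$, under whatever adaptive law) of a pointwise quantity depending only on $u$ and $\vd{\tau}$.

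The next step is to control that pointwise quantity. Writing $\eta = \eta_{\vd{\tau}}(u)$ and $\eta_0 = \eta_{\vc{0}}(u)$, we need a bound on $g_\lambda(u) := \eta_0^{-\lambda}\eta^{1+\lambda} + (1-\eta_0)^{-\lambda}(1-\eta)^{1+\lambda}$ (i.e.\ the order-$(1+\lambda)$ ``moment'' of the likelihood ratio of two Bernoullis). The key structural facts from the Castro--Nowak construction are that $\eta_{\vd{\tau}}$ and $\eta_{\vc{0}}$ differ only on the ``active'' cells and that on those cells $|\eta_{\vd{\tau}}(u) - \tfrac12|$ and $|\eta_{\vc{0}}(u)-\tfrac12|$ are both of order $c\,\beta_m^{\kappa-1}$ (the noise condition with distance $\sim \beta_m$), so in particular both $\eta,\eta_0$ stay bounded away from $0$ and $1$ with $\min(\eta_0,1-\eta_0) \geq \tfrac12 - c\beta_m^{\kappa-1} \geq \tfrac{1-2c\beta_m}{2}$ (after crudely bounding $\beta_m^{\kappa-1}\le\beta_m$ for $\beta_m\le 1$, $\kappa\ge1$). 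A Taylor/convexity estimate then gives $g_\lambda(u) \leq 1 + C\lambda(1+\lambda)\frac{(\eta-\eta_0)^2}{\min(\eta_0,1-\eta_0)}$ for $u$ in an active cell and $g_\lambda(u)=1$ otherwise, and since $|\eta - \eta_0| \leq c\beta_m^{\kappa-1}$ on active cells while the active cells under the fixed $U$-marginal occupy at most the set-distance $\sim \beta_m$ worth of measure --- actually one uses $P_U(\text{active cells}) \le 1$ combined with $(\eta-\eta_0)^2 \le c^2\beta_m^{2(\kappa-1)}$ --- the per-step increment over $Q$ is at most $1 + \frac{8c^2\beta_m^{2(\kappa-1)}\lambda}{1-2c\beta_m}$ or so. (The constant $16$ rather than $8$ presumably comes from using $\lambda(1+\lambda)\le 2\lambda$ for $\lambda\in(0,1]$.) Crucially this bound is \emph{uniform in $u$ and in the past}, so the adaptive choice of $P_{U_r\mid(U,V)_r^-}$ is irrelevant: the iterated expectation is at most $\bigl(1 + \tfrac{8c^2\beta_m^{2(\kappa-1)}\lambda}{1-2c\beta_m}\bigr)^n \le \exp\bigl(\tfrac{8c^2\beta_m^{2(\kappa-1)}\lambda n}{1-2c\beta_m}\bigr)$, and then $1+x\le e^x$ plus possibly a factor $2$ in the exponent yields \eqref{eq:classmain}.

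Concretely I would (i) verify the RN-derivative factorization, citing that the $U$-kernels coincide; (ii) perform the backwards induction over $r=n,n-1,\dots,1$, at each stage extracting the supremum over $u$ and over the conditioning history of the single-coordinate integral, reducing to a product of $n$ identical scalar bounds; (iii) prove the scalar Bernoulli Rényi-moment estimate $g_\lambda(u)\le 1 + \frac{16 c^2\beta_m^{2(\kappa-1)}\lambda}{1-2c\beta_m}$ using the noise/smoothness properties of the Castro--Nowak $\eta_{\vd{\tau}}$ recalled in Remark \ref{rem:castronowakset} (this requires only that $\eta_0$ and $\eta_{\vd\tau}$ are both within $c\beta_m^{\kappa-1}$ of $1/2$, hence bounded away from the endpoints, and differ by at most $c\beta_m^{\kappa-1}$); and (iv) assemble, using $(1+x)^n\le e^{nx}$. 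The main obstacle is step (iii): getting the Rényi-moment (equivalently the order-$(1+\lambda)$ $\chi$-type divergence) between the two perturbed Bernoullis sharp enough, and in particular getting the denominator $(1-2c\beta_m)$ exactly right, requires carefully tracking the worst-case ratio $\frac{(\eta-\eta_0)^2}{\min(\eta_0,1-\eta_0,\eta,1-\eta)}$ over the active cells rather than using a loose uniform lower bound on the densities; a secondary subtlety is making sure the argument genuinely covers \emph{adaptive} algorithms, which it does precisely because the per-coordinate bound holds pointwise in $(u, \text{history})$ so no independence across $r$ is needed.
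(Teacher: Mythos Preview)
Your approach is essentially the same as the paper's: factor out the shared adaptive kernels, peel off one coordinate at a time by bounding the inner integral uniformly in $u$ and in the history, and induct using $(1+x)^n\le e^{nx}$. Two points worth correcting in step (iii): the Castro--Nowak construction gives $\eta_{\vc{0}}(u)\in[\tfrac12-c\beta_m,\tfrac12+c\beta_m]$ directly (so your detour via $\beta_m^{\kappa-1}\le\beta_m$, which actually fails for $1\le\kappa<2$, is unnecessary), and the separation bound is $|\eta_{\vd\tau}(u)-\eta_{\vc{0}}(u)|\le 2c\beta_m^{\kappa-1}$, not $c\beta_m^{\kappa-1}$; together these account for the constant $16$. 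Rather than a Taylor expansion, the paper obtains the per-step bound by invoking the Sason--Verd\'u inequality $D_{1+\lambda}(P\|Q)\le\log\bigl(1+\tfrac{2\delta^2}{\min_v Q(v)}\bigr)$ for $\lambda\in(0,1]$ with $\delta$ the total variation distance, which gives the clean form $I_n\le\exp(\lambda D_{1+\lambda})\le\bigl(1+\tfrac{16c^2\beta_m^{2(\kappa-1)}}{1-2c\beta_m}\bigr)^\lambda$ without tracking $\lambda(1+\lambda)$ by hand.
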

\begin{proof} See Appendix \ref{sec:proofclassmain}. \end{proof}
Combining Lemma \ref{lem:classmain}  with Theorem \ref{thm:main},  we deduce the following lower bound. 
\begin{prop} \label{prop:activemain}
Let $\rho= (d-1)/\alpha$. For any positive constant  $\nu$, the risk  of a classifier $\whG_n$ learnt via any active learning algorithm satisfies
\begin{equation} \label{eq:riskbd2A}
\sup_{P_{UV} \in \mathsf{BF}(\alpha, \kappa, L, c)}\, \left\{ \ep [R(\whG_n) ] - R(G^*) \right\}   \geq  \frac{4c \nu^{\kappa \alpha}}{\kappa} \left(\frac{LH}{32} \right)^{\kappa} n^{-\frac{\kappa}{2\kappa - 2 + \rho}} \, \e_M,
\end{equation}
where  
\begin{equation}
\e_M \geq
1 -  \frac{1+\lambda}{\lambda^{\lambda/(1+\lambda)}}  \exp \left( \frac{ - \lambda n^{\frac{\rho}{2\kappa -2 + \rho}}   }{(1+\lambda) \nu^{d-1}} \left( \frac{\log 2}{8}  
- \frac{16 c^2   (LH)^{2\kappa-2} \nu^{d-1+ 2\alpha(\kappa -1)} }{1- 2c LH n^{-1/(2\kappa-2 + \rho)}/\nu}  \right) \right)  
\label{eq:n5boundA}
\end{equation}
for any  $\lambda \in (0,1]$. Therefore, for large $n$ we have
\be
\begin{split}
& \sup_{P_{UV} \in \mathsf{BF}(\alpha, \kappa, L, c)}\, \left\{ \ep [R(\whG_n) ] - R(G^*) \right\} \\
&    \geq \left[ \frac{4 c}{ \kappa 32^{\kappa}} 
\left(\frac{\log 2}{128 c^2} \right)^{\frac{\kappa}{2\kappa- 2 + \rho}} \left( LH \right)^{\frac{\kappa \rho}{2\kappa-2 + \rho }} \right] n^{-\frac{\kappa}{2\kappa - 2 + \rho}}(1-o(1)).
\end{split}
\label{eq:learning_asymp}
\ee
\end{prop}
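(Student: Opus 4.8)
The plan is to mirror the proof of Proposition~\ref{prop:binyumain}: turn the packing guarantee of Remark~\ref{rem:castronowakset} into a floor $\psi_n$ on the excess risk incurred whenever a minimum-distance decoder errs, reduce the minimax excess risk to the average error probability $\e_M$ through \eqref{eq:tsyb}--\eqref{eq:riskbd}, bound $\e_M$ with Theorem~\ref{thm:main} using the R\'enyi-divergence estimate of Lemma~\ref{lem:classmain} and the packing-set size $M \geq 2^{m^{d-1}/8}$, and finally tune the bandwidth $m$ (and a scale constant $\nu$). The only genuinely new step relative to the density application is the first one, because here the target functional $\ep[R(\whG_n)]-R(G^*)$ is distribution-dependent and is not the pseudometric used to construct the packing.

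\textbf{Step 1 (risk floor).} Following \cite{castroNowak08}, I would decode with the set pseudometric $d_\Delta$ of \eqref{eq:set_dist_def}: set $\whi = \argmin_{1\le j\le M} d_\Delta(\whG_n, G^*_j)$ over the $M$ Bayes classifiers of Remark~\ref{rem:castronowakset}. The triangle inequality for $d_\Delta$, exactly as in \eqref{eq:tocontrol0}, gives $\{\whi\neq i\} \Rightarrow d_\Delta(\whG_n,G^*_i)\geq \tfrac12 d_\Delta(G^*_i,G^*_{\whi})\geq \beta_m/16$. Then, invoking the defining properties of $\mathsf{BF}(\alpha,\kappa,L,c)$ --- the boundary of $G^*_i$ is an $\alpha$-H\"older graph over $[0,1]^{d-1}$, $|2\eta_i(u)-1|\geq c\,D(u)^{\kappa-1}$ at distance $D(u)$ from it, and $P_U$ has density bounded above and below --- one lower-bounds $R(\whG_n)-R(G^*_i)=\int_{\whG_n\Delta G^*_i}|2\eta_i(u)-1|\,dP_U(u)$ by a constant multiple of $\kappa^{-1}\beta_m^{\kappa}$ (the extremal configuration being a thin slab of mass $\beta_m/16$ hugging the boundary, over which $\int D^{\kappa-1}\gtrsim \kappa^{-1}\beta_m^{\kappa}$). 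Tracking the explicit bump geometry of \cite{castroNowak08} produces the floor $\psi_n=\tfrac{4c}{\kappa}\big(LH m^{-\alpha}/32\big)^{\kappa}$ with $\{\whi\neq i\}\Rightarrow R(\whG_n)-R(G^*_i)\geq\psi_n$.

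\textbf{Steps 2--4 (reduction, Theorem~\ref{thm:main}, optimization).} With this $\psi_n$, the chain \eqref{eq:tsyb}--\eqref{eq:tocontrol0}--\eqref{eq:riskbd} (identity $w$, the constant $A$ absorbed into $\psi_n$) gives $\sup_{P_{UV}\in\mathsf{BF}}\{\ep[R(\whG_n)]-R(G^*)\}\geq\psi_n\,\e_M$, where $\e_M$ is the optimal-decoder error probability for the $M$-ary test induced by the codewords $\eta_{\vd{\tau}}$, $\vd{\tau}\in\AAA\setminus\{\vd{0}\}$, over the channel fixed by the active-learning algorithm. I then apply Theorem~\ref{thm:main} with $Q_{\vc U,\vc V}$ as in Lemma~\ref{lem:classmain} (identical feedback-dependent feature law, only the label channel changing) and $\lambda\in(0,1]$: since $\exp(\lambda D_{1+\lambda}(P_{\vd{\tau}}\|Q))=\int(dP_{\vd{\tau}}/dQ)^{1+\lambda}\,dQ$, Lemma~\ref{lem:classmain} bounds the bracket in \eqref{eq:main}, while $M\geq 2^{m^{d-1}/8}$ controls $(\lambda M)^{\lambda/(1+\lambda)}$, yielding
\[
\e_M \geq 1-\frac{1+\lambda}{\lambda^{\lambda/(1+\lambda)}}\exp\!\left(-\frac{\lambda}{1+\lambda}\Big[\frac{\log 2}{8}m^{d-1}-\frac{16c^2\beta_m^{2(\kappa-1)}n}{1-2c\beta_m}\Big]\right).
\]
Choosing $m=n^{1/(\alpha(2\kappa-2+\rho))}/\nu$ makes both $m^{d-1}$ and $\beta_m^{2(\kappa-1)}n$ of the critical order $n^{\rho/(2\kappa-2+\rho)}$ (using $d-1=\alpha\rho$ and $\beta_m=LH\nu^\alpha n^{-1/(2\kappa-2+\rho)}$); factoring it out reproduces \eqref{eq:n5boundA}, and $\psi_n=\tfrac{4c\nu^{\kappa\alpha}}{\kappa}(LH/32)^\kappa n^{-\kappa/(2\kappa-2+\rho)}$ gives \eqref{eq:riskbd2A}. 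For \eqref{eq:learning_asymp}, hold $\lambda$ fixed and let $\nu$ grow with $n$ (slowly, as in Proposition~\ref{prop:binyumain}) up to the threshold where $\tfrac{\log2}{8}-16c^2(LH)^{2\kappa-2}\nu^{d-1+2\alpha(\kappa-1)}$ vanishes, i.e. $\nu^{d-1+2\alpha(\kappa-1)}\to\tfrac{\log2}{128c^2(LH)^{2\kappa-2}}$; then $\e_M\to1$, and plugging this $\nu$ into the prefactor of \eqref{eq:riskbd2A} (using $d-1+2\alpha(\kappa-1)=\alpha(2\kappa-2+\rho)$ and $\kappa-\tfrac{2(\kappa-1)\kappa}{2\kappa-2+\rho}=\tfrac{\kappa\rho}{2\kappa-2+\rho}$ to combine the powers of $LH$) gives exactly the stated constant.

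\textbf{Main obstacle.} The delicate part is Step~1: unlike the density application the loss is distribution-dependent and is not the pseudometric used to build the packing, so one must combine the $\alpha$-H\"older boundary regularity, the $\kappa$-margin condition, and the explicit bump geometry of \cite{castroNowak08} to obtain a clean, constant-sharp floor $\psi_n$. The other heavy ingredient --- the R\'enyi-divergence bound of Lemma~\ref{lem:classmain} in the presence of feedback (which is why $Q$ must share the feature law $P(U_r\mid(U,V)_r^-)$) --- is already in hand, so Steps~2--4 amount to a routine substitution followed by a one-variable optimization, parallel to Proposition~\ref{prop:binyumain}.
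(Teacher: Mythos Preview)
Your proposal is correct and matches the paper's proof essentially step for step: the same packing set and bandwidth choice $m=n^{1/(\alpha(2\kappa-2)+d-1)}/\nu$, the same application of Theorem~\ref{thm:main} via Lemma~\ref{lem:classmain} and $M\geq 2^{m^{d-1}/8}$, and the same asymptotic optimization of $\nu$. The only cosmetic difference is that the paper keeps $\psi_n=\beta_m/16$ as the \emph{set-distance} floor and then invokes the implication $\{d_\Delta(\whG_n,G^*)\geq\psi_n\}\Rightarrow\{R(\whG_n)-R(G^*)\geq f(\psi_n)\}$ with $f(\psi_n)=\min\bigl(\tfrac{4c}{\kappa 2^\kappa}\psi_n^\kappa,\psi_n\bigr)$ (citing \cite[p.~2351]{castroNowak08} rather than rederiving it), whereas you fold this into Step~1 and call the composite quantity $\psi_n$.
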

\begin{proof}
Consider the $M$ codewords chosen from the packing set $\packvalx{\beta_m/8}$, as described in Remark \ref{rem:castronowakset},  with corresponding Bayes classifiers $G^*_1, \ldots, G^*_M$ . The minimum pairwise set distance between these Bayes classifiers is at least $\beta_m/8$.
Equating the minimum distance of the packing set given by $2 A \psi_n$ (in Theorem \ref{thm:main}) to $\beta_m/8$,  taking $A=1$ we obtain
$\psi_n = \beta_m/16 = LH m^{-\alpha}/16$.  

Using Lemma \ref{lem:classmain} in Theorem \ref{thm:main},  for any $\lambda \in (0,1]$ the average error probability $\epsilon_M$ can be bounded from below as
\begin{align}
\label{eq:em_bndclass}
 \epsilon_M & \geq  1- \frac{1+\lambda}{\lambda^{\lambda/(1+\lambda)}} M^{-\lambda/(1+\lambda)}   \exp \left( \frac{16  c^2  \beta_m^{2(\kappa-1)} \lambda n}{(1 - 2c \beta_m)(1+\lambda) } \right)  \\
  & \stackrel{(a)}{\geq}  1 -  \frac{1+\lambda}{\lambda^{\lambda/(1+\lambda)}}
    \exp \left( \frac{\lambda}{1+\lambda} \left( \frac{16  c^2 \beta_m^{2(\kappa-1)}n }{(1 - 2c \beta_m)} - \frac{m^{d-1} \log 2}{8} \right) \right).
\end{align}
where inequality $(a)$ is obtained using the fact that packing set of distributions $\packvalx{\beta_m/8}$ has $M \geq 2^{m^{d-1}/8}$, as described in Remark \ref{rem:castronowakset} above. 

Now, consider any  distribution $P_{UV} \in \mathsf{BF}(\alpha, \kappa, L, c)$ with Bayes classifier $G^*$. It is shown in \cite[p.2351]{castroNowak08} that the event 
\[  \left\{  d_\Delta( \whG_n, G^*) \geq  \psi_n \right\}  \Rightarrow \left\{ R(\whG_n) - R(G^*) \geq \min\left(  \frac{4c}{\kappa 2^{\kappa}} \psi^\kappa_n, \psi_n \right) \right\}. \]
Defining $ f(\psi_n) := \min\left(  \frac{4c}{\kappa 2^{\kappa}} \psi^\kappa_n, \psi_n \right)$, we therefore obtain the following chain of inequalities:
\begin{align*}
\e_M & \stackrel{(b)}{\leq} \sup_{P_{UV} \in \mathsf{BF}(\alpha, \kappa, L, c)} \,  \pr \left( d_\Delta( \whG_n, G^*) \geq  \psi_n \right)  \\
&  \leq \sup_{P_{UV} \in \mathsf{BF}(\alpha, \kappa, L, c)} \,  \pr \left( R(\whG_n) - R(G^*) \geq f(\psi_n) \right)   \\
&   \leq  \sup_{P_{UV} \in \mathsf{BF}(\alpha, \kappa, L, c)} \,  \frac{  \expec[ R(\whG_n) ] - R(G^*)}{ f(\psi_n)},
\end{align*}
where inequality $(b)$ follows from \eqref{eq:tocontrol0}. Hence, using $\psi_n= LH m^{-\alpha}/16$, we have\footnote{As $m \gg 1$ and $\kappa \geq 1$, we assume for brevity that $f(\psi_n) =  \frac{4c}{\kappa 2^{\kappa}} \psi^\kappa_n$. This is always true for sufficiently large $m$ when $\kappa > 1$. However, if $\kappa=1$ and $c > \frac{1}{2}$, then $f(\psi_n) = \psi_n$; however, the definition of $c$ in \cite[Eq. (9)]{castroNowak08} implies that $c$ can be restricted to $(0,\frac{1}{2}]$ without loss of generality.}
\ben
\begin{split}
& \sup_{P_{UV} \in \mathsf{BF}(\alpha, \kappa, L, c)}  \, \expec[ R(\whG_n) ] - R(G^*)  \\
 &  \geq  f(\psi_n) \e_M   \\
 & = \frac{4c}{\kappa} \left(\frac{L H m^{-\alpha}}{32}\right)^{\kappa} \e_M  \\
& \geq  \frac{4c}{\kappa} \left(\frac{L H m^{-\alpha}}{32}\right)^{\kappa}  \left[ 1 -  \frac{1+\lambda}{\lambda^{\lambda/(1+\lambda)}}
    \exp \left( \frac{\lambda}{1+\lambda} \left( \frac{16  c^2 n \beta_m^{2(\kappa-1)} }{(1 - 2c \beta_m)} - \frac{m^{d-1} \log 2}{8} \right) \right) \right],
\end{split} 
\een
where the last inequality is obtained using \eqref{eq:em_bndclass}. The result follows by taking $m = n^{\frac{1}{\alpha(2\kappa -2) + d-1}}/\nu$.

To obtain \eqref{eq:learning_asymp}, we choose the supremum of $\nu$ such that $\e_M \to 1$ as $n \to \infty$ in order to obtain the largest possible prefactor in \eqref{eq:riskbd}.
\end{proof}

\begin{rem}
 The paper \cite{castroNowak08}  derives Fano-type bounds in this setting:  in particular, taking $m = n^{\frac{1}{\alpha(2\kappa -2) + d-1}}/\nu$, the computation in p.2351 of \cite{castroNowak08} together with Theorem 6 of that paper gives the same bound as \eqref{eq:riskbd2A}, but with a looser lower bound on $\e_M$ given by 
 \begin{align}
 \e_M &  \geq \left( 1 -  2 \xi - \sqrt{\frac{32 \xi \nu^{d-1}}{\log 2}} n^{-\frac{\rho}{4(\kappa-1) +2\rho }} \right), 
\label{eq:castrofano}
\end{align}
where $\xi = \frac{256}{\log 2} c^2 (LH)^{2\kappa -2} \nu$.
For the  bound \eqref{eq:castrofano} to be meaningful, we need $\xi < \frac{1}{2}$, which implies $\nu <  \frac{ \log 2}{512 c^2 (LH)^{2\kappa -2}}$. Again,  Proposition \ref{prop:activemain} provides a strong converse, while \eqref{eq:castrofano} provides a weak one ($\e_M$ bounded away from zero).
\end{rem}

\section{Application to compressed sensing} \label{sec:cs_bound}

We now describe how Theorem \ref{thm:main} can give improved risk lower bounds in compressed sensing. The goal in compressed sensing is to estimate a sparse vector $\mbf{x} \in \mathbb{R}^n$ from a measurement $\mbf{y} \in \mathbb{R}^m$ of the form
\be \vc{y} =\mbf{A} \vc{x} + \vc{w}. \label{eq:cs_model} \ee
Here $\mbf{A} \in \mathbb{R}^{m \times n}$ is  the (known) measurement matrix,   and $\vc{w} \sim \mc{N}(0, \sigma^2 \mathbf{I}_m)$ is the  noise vector.  Throughout this section $\norm{\vc{x}}$ denotes the $L_2$ Euclidean norm of a vector $\vc{x}$, and $\norm{\mbf{A}}_F$ denotes the Frobenius norm of a matrix $\mbf{A}$, defined by $\norm{\mbf{A}}_F^2 =
\trace \left( \mathbf{A^T} \mathbf{A} \right)$.
We assume that the signal $\vc{x}$ is $k$-sparse, by considering $\vc{x} \in \Sigma_k$, where 
\[ \Sigma_k := \left\{ \vc{x} \in \mathbb{R}^n:
\|  \vc{x} \|_0 \leq k, \norm{ \vc{x}} = 1 \right\}. \]

In the pioneering works \cite{CandesTaoLP, donohoCS, candesRT06} of Cand\`{e}s, Donoho, Romberg, and Tao, among others, it was shown that under suitable assumptions on $\vc{A}$ and the sparsity level $k$, the signal could be efficiently estimated to a high degree of accuracy, even when  
$m \ll n$. For example, when $\mbf{A}$ satisfies the Restricted Isometry Property \cite{candes08RIP}, reconstruction techniques based on minimizing the $L_1$ norm produce an estimate $\whx$  which satisfies
\[ \frac{1}{n}\norm{\vc{x} - \whx}^2  \leq C_0 \frac{k \sigma^2}{m} \log n \]
with high probability, provided that $m$ is of order at least $k \log (n/k)$ \cite{BickelRT09}. ($C_0$ is a universal positive constant.)   

To complement these achievability results, several authors, e.g.,  \cite{aeronSZ10,yeZ10,raskuttiWY11,candesDav13} have derived  lower bounds on the minimax risk under various assumptions on $\vc{A}$ and  $\vc{x}$. The minimax risk is defined as 
\begin{equation} \label{eq:minimaxcs}
\minimaxcs := \inf_{ \whx} \sup_{\vc{x} \in \Sigma_k} \ep \left[ \frac{1}{n} \norm{  \whx(\vc{y}) - \vc{x}}^2 \right],
\end{equation}
We show how Theorem \ref{thm:main} can be used to obtain a strong converse, improving by a constant factor the lower bound on $\minimaxcs$  obtained using Fano's inequality  by Cand\`{e}s and Davenport in \cite{candesDav13}. Using the probabilistic method, \cite{candesDav13} shows  the existence of a packing set of well-separated  vectors in $\Sigma_k$. To be specific:

\begin{rem}[Packing set construction] \cite[Lemma 2]{candesDav13}  \label{rem:candesset} There exists a subset $\mc{X} \subseteq \Sigma_k$ of size $M:= \abs{\mc{X}} = (n/k)^{k/4}$ whose elements $\vc{u}_i$ satisfy
\begin{enumerate}
\item $\norm{\vc{u}_i}^2  = 1$.
\item $\norm{ \vc{u}_i - \vc{u}_j}^2 \geq \frac{1}{2}$ for all $1\leq i,j \leq M$ such that $ i \neq j$.
\item \label{it:packprop} $\norm{\frac{1}{M}\sum_{i=1}^M \vc{u}_i \vc{u}_i^T - \frac{1}{n} \mbf{I} }_{\rm{op}} \leq \beta/n$.
 Here $\beta$ is a constant that can be made arbitrarily small with growing $n$.
\end{enumerate}
The set $\mc{X}$ gives a packing set $\packval{C/\sqrt{2}} := \{ \theta_1, \ldots, \theta_M \}$ of codewords with minimum distance
$\norm{ \theta_i - \theta_j} \geq \frac{C}{\sqrt{2}}$, simply by taking $\theta_i = C \vc{u}_i$, where the value of $C$ will be specified later. 
\end{rem}

In fact, we  consider a subset of the packing set $\packval{C/\sqrt{2}}$, defined as follows:
\begin{lemma} \label{lem:packsubset}
Let $\delta_M \in [\frac{1}{M}, 1-\frac{1}{M} ]$, where $M= (n/k)^{k/4}$.  Then there exists a subset $\packvalp{C/\sqrt{2}} \subseteq \packval{C/\sqrt{2}}$ such that  $M':=\lceil \delta_M M \rceil$ and 
\[  \max_{\theta_i \in  \packvalp{C/\sqrt{2}}} \,  \norm{\mbf{A} \theta_i}^2 \leq  \frac{\norm{\mbf{A}}_F^2  C^2 (1+\beta)}{n (1-\delta_M) }. \]
\end{lemma}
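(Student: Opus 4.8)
The plan is a straightforward averaging (first‑moment) argument over the whole packing set $\packval{C/\sqrt{2}}=\{\theta_1,\dots,\theta_M\}$. First I would sum $\norm{\mbf{A}\theta_i}^2$ over all $M$ codewords. Writing $\theta_i=C\vc{u}_i$ and using the cyclic property of the trace,
\[
\sum_{i=1}^M \norm{\mbf{A}\theta_i}^2 \;=\; C^2 \sum_{i=1}^M \trace\!\left(\mbf{A}^{T}\mbf{A}\,\vc{u}_i\vc{u}_i^{T}\right) \;=\; C^2 M\,\trace\!\left(\mbf{A}^{T}\mbf{A}\cdot\tfrac{1}{M}\sum_{i=1}^M \vc{u}_i\vc{u}_i^{T}\right).
\]
Now property \ref{it:packprop} of Remark \ref{rem:candesset} says $\norm{\frac{1}{M}\sum_i \vc{u}_i\vc{u}_i^{T}-\frac{1}{n}\mbf{I}}_{\mathrm{op}}\le \beta/n$, so every eigenvalue of $\frac{1}{M}\sum_i \vc{u}_i\vc{u}_i^{T}$ is at most $(1+\beta)/n$; that is, $\frac{1}{M}\sum_i \vc{u}_i\vc{u}_i^{T}\preceq\frac{1+\beta}{n}\mbf{I}$ in the positive semidefinite order. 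Since $\mbf{A}^{T}\mbf{A}\succeq 0$ and $\trace(BC)\ge 0$ whenever $B,C\succeq 0$, this yields
\[
\sum_{i=1}^M \norm{\mbf{A}\theta_i}^2 \;\le\; \frac{C^2 M(1+\beta)}{n}\,\trace(\mbf{A}^{T}\mbf{A}) \;=\; \frac{C^2 M(1+\beta)}{n}\,\norm{\mbf{A}}_F^2 .
\]

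Next I would apply an elementary counting (Markov/pigeonhole) bound: if $M$ nonnegative numbers sum to $S$, then for any $t>0$ at most $S/t$ of them exceed $t$. Choosing $t=\frac{\norm{\mbf{A}}_F^2 C^2(1+\beta)}{n(1-\delta_M)}$ makes $S/t$ equal to $(1-\delta_M)M$, so at most $(1-\delta_M)M$ codewords are ``bad''; being a nonnegative integer, the number of bad codewords is at most $\lfloor(1-\delta_M)M\rfloor$. Hence the number of ``good'' codewords, those with $\norm{\mbf{A}\theta_i}^2\le t$, is at least $M-\lfloor(1-\delta_M)M\rfloor=\lceil\delta_M M\rceil=M'$, where I use the identity $M-\lfloor x\rfloor=\lceil M-x\rceil$ valid for integer $M$. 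Taking $\packvalp{C/\sqrt{2}}$ to be any $M'$ of these good codewords proves the lemma; one checks $1\le M'\le M$ directly from $\delta_M\in[\tfrac1M,1-\tfrac1M]$.

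There is no substantive obstacle here. The only two points requiring a little care are (i) converting the operator‑norm bound of property \ref{it:packprop} into the one‑sided semidefinite inequality $\frac{1}{M}\sum_i\vc{u}_i\vc{u}_i^{T}\preceq\frac{1+\beta}{n}\mbf{I}$, so that the trace inequality against $\mbf{A}^{T}\mbf{A}$ points in the correct direction; and (ii) the floor/ceiling bookkeeping, which is exactly what pins the size of the selected subset down to $M'=\lceil\delta_M M\rceil$ rather than something marginally smaller.
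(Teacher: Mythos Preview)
Your proof is correct and essentially matches the paper's: both first bound $\sum_i\norm{\mbf{A}\theta_i}^2$ via the trace identity together with property~\ref{it:packprop} of Remark~\ref{rem:candesset} (the paper uses $\trace(AB)\le\trace(A)\|B\|_{\mathrm{op}}$ for $A,B\succeq0$, which is equivalent to your semidefinite-ordering step), and then extract a large ``good'' subset by an averaging/order-statistic argument. The paper phrases the second step as ``the $j$th smallest of $M$ nonnegative numbers with mean $c$ is at most $c/(1-(j-1)/M)$'' and picks the $M'$ smallest, which is exactly your Markov/pigeonhole count rewritten; your floor/ceiling bookkeeping is the precise way to land on $M'=\lceil\delta_M M\rceil$.
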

\begin{proof}
We first bound the average over the packing set $\packval{C/\sqrt{2}}$, given by    $\frac{1}{M} \sum_{i =1}^M  \| \mbf A \theta_i  \|^2$. Using steps similar to those in \cite[p.320]{candesDav13}, we have
\begin{align}
\frac{1}{M} \sum_{i =1}^M  \| \mbf A \theta_i  \|^2   
& = \frac{1}{M} \sum_{i =1}^M  \trace \left( \mbf{A} \theta_i \theta_i^T \mbf{A}^T\right)  \nonumber \\
& =   \trace \left( (\mathbf{A^T} \mathbf{A}) \frac{1}{M} \sum_{i =1}^M \theta_i \theta_i^T\right)  \nonumber \\
& \stackrel{(a)}{\leq}  \trace \left( \mathbf{A^T} \mathbf{A} \right) \norm{\frac{C^2}{M} \sum_{i =1}^M \vc{u}_i \vc{u}_i^T} \nonumber \\
& \stackrel{(b)}{\leq}   \norm{\mbf{A}}_F^2 C^2 \frac{ (1+\beta)}{n}.
\label{eq:avg_Ax_bnd}
\end{align}
In the above chain, step $(a)$ holds because both $(\mathbf{A^T} \mathbf{A})$ and $\sum_{i =1}^M \vc{u}_i \vc{u}_i^T/M$ are positive semi-definite. Step $(b)$  is obtained using Property \ref{it:packprop}  of the packing set as defined in Remark \ref{rem:candesset}.

We  use the  fact that if  the average of $M$ non-negative numbers $c_1 \leq c_2 \ldots \leq c_M$ is $c$, then $c_{j} \leq \frac{c}{1-(j-1)/M}$, for  $1 \leq j  \leq M$
(because otherwise the sum of the $(M-j+1)$ largest numbers will exceed $Mc$).  The result then follows by picking $M'$ elements of $\packval{C/\sqrt{2}}$ in increasing order of $\norm{\mbf{A} \theta}^2$, and calling this set $\packvalp{C/\sqrt{2}}$.
\end{proof}

As we restrict attention to the subset $\packvalp{C/\sqrt{2}}$ in the rest of this section, with mild abuse of notation, let us denote its elements by $\{\theta_1, \ldots, \theta_{M'} \}$. Also, let $\phi( \vc{y}; \boldsymbol{m}, \boldsymbol{\Sigma})$ denote the normal density in $\mathbb{R}^m$ with mean vector $\boldsymbol{m}$ and covariance matrix $\boldsymbol{\Sigma}$.  Then, with $\mu$ denoting the Lebesgue measure on $\mc{Y}$,  from the measurement model \eqref{eq:cs_model}, for any $\theta_i$ we have
\be
\frac{dP_{\vc{Y}|\theta_i}}{d \mu}( \vc{y} ) = \phi( \vc{y}; \mbf{A} \theta_i, \sigma^2 \mathbf{I}_m)  =  \prod_{r=1}^m  \phi(y_r; \mbf{A}^T_r \theta_i, \sigma^2 ), \label{eq:pyx_prod}
\ee
where the $r$th row of $\mbf{A}$ is denoted by  $\mbf{A}^T_r \in \mathbb{R}^n$. Further, we choose 
\be \frac{dQ_{\vc{Y}}}{d \mu}( \vc{y})=  \phi(\vc{y}; \vc{0}, \sigma^2 \mathbf{I}_m)  = \prod_{r=1}^m  \phi(y_r;  0, \sigma^2),  \label{eq:qy_prod}\ee
and  prove the following bound for the integral in \eqref{eq:main}.
 
\begin{lemma} Let $P_{\vc{Y}|\theta_i}$ and $Q_{\vc{Y}}$ given by \eqref{eq:pyx_prod} and \eqref{eq:qy_prod}, respectively.  Then for any $\lambda >0$,   
\begin{equation}
 \int_{\mc{Y}} \left( \frac{dP_{\vc{Y}|\theta_i} }{dQ_{\vc{Y}}} (\vc{y}) \right)^{1+\lambda} dQ_{\vc{Y}}(\vc{y}) 
  =
  \exp \left(  \frac{ \lambda( 1+\lambda)}{2 \sigma^2}   \| \mbf A  \theta_i  \|^2 \right).
  \label{eq:pq_exp_bnd}
\end{equation}
Hence,  the  bracketed term in \eqref{eq:main} can be bounded by 
\be  \sum_{i =1}^{M'}  \frac{1}{M'} 
 \int_{\mc{Y}} \left( \frac{dP_{\vc{Y}| \theta_i} }{dQ_{\vc{Y}}} (\vc{y}) \right)^{1+\lambda}  dQ_{\vc{Y}}(\vc{y})  \leq 
 \exp\left( \frac{\lambda( 1+\lambda)}{2 \sigma^2}  \frac{\norm{\mbf{A}}_F^2  C^2 (1+\beta)}{n (1-\delta_M)} \right).  \label{eq:subset_av}\ee
\label{lem:simplify_term}
\end{lemma}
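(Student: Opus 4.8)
The plan is to establish \eqref{eq:pq_exp_bnd} by a direct Gaussian integral computation, exploiting the product structure of both $P_{\vc{Y}|\theta_i}$ in \eqref{eq:pyx_prod} and $Q_{\vc{Y}}$ in \eqref{eq:qy_prod}. First I would observe that the Radon--Nikodym derivative $\frac{dP_{\vc{Y}|\theta_i}}{dQ_{\vc{Y}}}(\vc{y})$ factors as $\prod_{r=1}^m \frac{\phi(y_r; \mbf{A}_r^T\theta_i, \sigma^2)}{\phi(y_r; 0, \sigma^2)}$, so the integral on the left of \eqref{eq:pq_exp_bnd} splits into a product of $m$ one-dimensional integrals, each of the form $\int_{\re} \left( \frac{\phi(t; a_r, \sigma^2)}{\phi(t; 0, \sigma^2)} \right)^{1+\lambda} \phi(t; 0, \sigma^2)\, dt$ with $a_r := \mbf{A}_r^T\theta_i$. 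Equivalently this is $\int_{\re} \phi(t; a_r, \sigma^2)^{1+\lambda} \phi(t; 0, \sigma^2)^{-\lambda}\, dt$. Writing out the Gaussian exponents, the integrand is $(2\pi\sigma^2)^{-1/2} \exp\!\left( -\frac{(1+\lambda)(t-a_r)^2 - \lambda t^2}{2\sigma^2} \right)$; expanding, the coefficient of $t^2$ in the exponent is $-\frac{1}{2\sigma^2}$ (the $\lambda$ terms cancel), the coefficient of $t$ is $\frac{(1+\lambda)a_r}{\sigma^2}$, and the constant term is $-\frac{(1+\lambda)a_r^2}{2\sigma^2}$. Completing the square in $t$ and carrying out the resulting standard Gaussian integral leaves exactly $\exp\!\left( \frac{\lambda(1+\lambda)a_r^2}{2\sigma^2} \right)$ — the cross term $\frac{(1+\lambda)^2 a_r^2}{2\sigma^2}$ minus the constant $\frac{(1+\lambda)a_r^2}{2\sigma^2}$ gives $\frac{\lambda(1+\lambda)a_r^2}{2\sigma^2}$.

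Multiplying these $m$ factors together and using $\sum_{r=1}^m a_r^2 = \sum_{r=1}^m (\mbf{A}_r^T\theta_i)^2 = \norm{\mbf{A}\theta_i}^2$ yields \eqref{eq:pq_exp_bnd}. Note that this identity is in fact a restatement of the well-known closed form for the R\'enyi divergence between two Gaussians with equal covariance, $D_{1+\lambda}(\phi(\cdot; \mbf{A}\theta_i, \sigma^2\mbf{I}) \| \phi(\cdot; 0, \sigma^2\mbf{I})) = \frac{(1+\lambda)}{2\sigma^2}\norm{\mbf{A}\theta_i}^2$, combined with the defining relation \eqref{eq:Renyi_div_def}; one could alternatively cite that fact directly, but the one-dimensional computation above is short and self-contained.

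For the second claim \eqref{eq:subset_av}, I would average \eqref{eq:pq_exp_bnd} over the restricted packing set $\packvalp{C/\sqrt{2}} = \{\theta_1, \ldots, \theta_{M'}\}$ and bound each term using the uniform control on $\norm{\mbf{A}\theta_i}^2$ supplied by Lemma \ref{lem:packsubset}: since the map $z \mapsto \exp\!\left(\frac{\lambda(1+\lambda)}{2\sigma^2} z\right)$ is increasing, $\frac{1}{M'}\sum_{i=1}^{M'} \exp\!\left(\frac{\lambda(1+\lambda)}{2\sigma^2}\norm{\mbf{A}\theta_i}^2\right) \leq \exp\!\left(\frac{\lambda(1+\lambda)}{2\sigma^2} \max_i \norm{\mbf{A}\theta_i}^2\right) \leq \exp\!\left(\frac{\lambda(1+\lambda)}{2\sigma^2} \cdot \frac{\norm{\mbf{A}}_F^2 C^2(1+\beta)}{n(1-\delta_M)}\right)$, which is exactly the right-hand side of \eqref{eq:subset_av}. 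There is no real obstacle here: the main ingredient is the routine Gaussian computation, and the only thing to be slightly careful about is tracking the cancellation of the $\lambda$-dependent terms in the $t^2$-coefficient (which is what makes the integral converge for all $\lambda > 0$) and correctly combining the completed-square constant with the original constant term to land on the factor $\lambda(1+\lambda)$ rather than $(1+\lambda)^2$ or $(1+\lambda)$.
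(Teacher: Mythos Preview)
Your proposal is correct and follows essentially the same approach as the paper: factor the integral into $m$ one-dimensional Gaussian integrals via the product structure of \eqref{eq:pyx_prod} and \eqref{eq:qy_prod}, complete the square in each factor to obtain $\exp\!\left(\frac{\lambda(1+\lambda)}{2\sigma^2}(\mbf{A}_r^T\theta_i)^2\right)$, multiply over $r$, and then invoke Lemma~\ref{lem:packsubset} to bound $\norm{\mbf{A}\theta_i}^2$ uniformly over the restricted packing set for \eqref{eq:subset_av}. Your write-up is slightly more explicit about the coefficient bookkeeping in the completion of the square and adds the helpful side remark connecting the identity to the closed-form R\'enyi divergence between equal-covariance Gaussians, but the substance is the same.
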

\begin{proof} See Appendix \ref{sec:prooflemsimp}. \end{proof}

Combining Lemma \ref{lem:simplify_term} with Theorem \ref{thm:main}, we deduce the following lower bound.
\begin{prop} \label{prop:csbd}
 For any $\lambda> 0$, $\Delta \in (0,1)$, and $M=(n/k)^{k/4}$, we have 
\be \begin{split}   \minimaxcs & = \inf_{ \whx} \sup_{x \in \Sigma_k} \ep \left[ \frac{1}{n} \norm{  \whx(\vc{y}) - \vc{x}}^2 \right] \\
& \geq \frac{ \sigma^2 }{ 4 \norm{\mbf{A}}_F^2  } \left( \frac{k}{4} \log \frac{n}{k} -1 \right)  \frac{ (1-\Delta)}{(1+\lambda) (1+\beta) } \e_M,  \label{eq:fin_samp_CS} 
\end{split} \ee
where 
\be
 \e_M \geq 1  -  (1+\lambda) \left( \frac{(\log M) M^{-\Delta}}{\lambda} \right)^{\lambda/(1+\lambda)}.
 \label{eq:eM_CS}
  \ee
Therefore, for large  $n$ we have
\be \minimaxcs \geq \frac{ \sigma^2 }{ 4 \norm{\mbf{A}}_F^2 } \left( \frac{k}{4} \log \frac{n}{k}\right)(1 - o(1)).  \label{eq:CS_asymp} \ee
\end{prop}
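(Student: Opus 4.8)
\textbf{Proof plan for Proposition \ref{prop:csbd}.}

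The plan is to run the same machinery as in the density-estimation and active-learning applications, now with the packing set $\packvalp{C/\sqrt{2}}$ of Lemma \ref{lem:packsubset}, the Gaussian $Q_{\vc{Y}}$ of \eqref{eq:qy_prod}, and the R\'enyi-divergence bound of Lemma \ref{lem:simplify_term}. First I would equate the minimum distance of the packing set, $C/\sqrt{2}$, with $2A\psi_n$ in Theorem \ref{thm:main}; taking $A=1$ gives $\psi_n = C/(4\sqrt{2})$, so that $\psi_n^2 = C^2/32$. Taking $w$ to be the identity (or rather $w(u)=u^2$ applied to the Euclidean distance, matching the $\frac1n\|\cdot\|^2$ loss), inequality \eqref{eq:riskbd} together with \eqref{eq:tocontrol0} gives $\minimaxcs \geq \frac{1}{n}\psi_n^2\,\e_M = \frac{C^2}{32 n}\e_M$. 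The remaining work is (i) to lower-bound $\e_M$ via Theorem \ref{thm:main} and Lemma \ref{lem:simplify_term}, and (ii) to choose $C$ and $\delta_M$ to make the bound as large as possible while keeping $\e_M\to 1$.

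For step (i), substitute the bound \eqref{eq:subset_av} into \eqref{eq:main} with $M$ replaced by $M' = \lceil \delta_M M\rceil$. This yields
\begin{equation*}
\e_M \geq 1 - (1+\lambda)\left(\lambda M'\right)^{-\lambda/(1+\lambda)}\exp\!\left(\frac{\lambda(1+\lambda)}{2\sigma^2}\cdot\frac{\|\mbf{A}\|_F^2 C^2(1+\beta)}{n(1-\delta_M)}\right)^{1/(1+\lambda)}.
\end{equation*}
The natural choice, which makes the exponential term and the $M'$ term balance at the desired rate, is to pick $C^2$ proportional to $\sigma^2 n \log M /\|\mbf{A}\|_F^2$; concretely, setting
\begin{equation*}
\frac{(1+\lambda)}{2\sigma^2}\cdot\frac{\|\mbf{A}\|_F^2 C^2(1+\beta)}{n(1-\delta_M)} = (1-\Delta)\log M
\end{equation*}
for the free parameter $\Delta\in(0,1)$, i.e.
\begin{equation*}
C^2 = \frac{2\sigma^2 n (1-\Delta)(1-\delta_M)\log M}{(1+\lambda)(1+\beta)\|\mbf{A}\|_F^2},
\end{equation*}
gives $\exp(\cdots)^{1/(1+\lambda)} = M^{(1-\Delta)/(1+\lambda)}$. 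Then, using $M' \geq \delta_M M$ and noting that $\delta_M$ can be taken bounded away from $0$ (so the $\delta_M$ factor only costs lower-order terms, absorbed once we pass to the asymptotic statement — or more carefully, taking $\delta_M = 1/\log M$ or similar so that $(1-\delta_M)\to 1$), one obtains $\e_M \geq 1 - (1+\lambda)(M^{\Delta}\lambda/\log M)^{-\lambda/(1+\lambda)} \cdot(\text{const})$, which after simplification is exactly \eqref{eq:eM_CS} (up to the bookkeeping of how $\delta_M$ and the ceiling are handled). Plugging this $C^2$ back into $\minimaxcs \geq C^2/(32 n)\cdot\e_M$ and using $\log M = \frac{k}{4}\log\frac{n}{k}$ gives \eqref{eq:fin_samp_CS}; the $-1$ inside the parenthesis there comes from bounding $M' \geq \delta_M M$ versus needing $\log M' $, i.e. from $\log\lceil\delta_M M\rceil \geq \log M - \log(1/\delta_M)$ and a judicious choice such as $\delta_M$ constant giving the additive $-1$ after rescaling — this is the one slightly delicate bit of constant-chasing.

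For step (ii), the asymptotic statement \eqref{eq:CS_asymp}: as $n\to\infty$ we have $\beta\to 0$ (Property 3 of Remark \ref{rem:candesset}), and we may let $\Delta\to 0$, $\lambda\to 0$, and $\delta_M\to 0$ all slowly enough that $\e_M\to 1$ (the term $(1+\lambda)((\log M)M^{-\Delta}/\lambda)^{\lambda/(1+\lambda)}$ tends to $0$ provided, e.g., $\lambda\Delta\log M\to\infty$, which is easy since $\log M\to\infty$). In that limit the prefactor $\frac{\sigma^2}{4\|\mbf{A}\|_F^2}(\frac{k}{4}\log\frac nk - 1)\frac{1-\Delta}{(1+\lambda)(1+\beta)}$ converges to $\frac{\sigma^2}{4\|\mbf{A}\|_F^2}\cdot\frac{k}{4}\log\frac nk$, giving \eqref{eq:CS_asymp}. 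I expect the main obstacle to be purely bookkeeping rather than conceptual: carefully tracking the interplay of the four parameters $C$, $\lambda$, $\Delta$, $\delta_M$ (and the ceiling in $M'$) so that the clean closed form \eqref{eq:fin_samp_CS}--\eqref{eq:eM_CS} emerges with exactly those constants, and verifying that Lemma \ref{lem:simplify_term}'s bound is applied to the restricted set $\packvalp{C/\sqrt{2}}$ with its improved $\max_i\|\mbf A\theta_i\|^2$ control rather than to the full packing set.
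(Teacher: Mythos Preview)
Your approach is exactly the paper's: apply Theorem \ref{thm:main} to the restricted packing set $\packvalp{C/\sqrt{2}}$, plug in Lemma \ref{lem:simplify_term}, choose $C^2$ so that the exponential term equals $M^{(1-\Delta)\lambda/(1+\lambda)}$, and take $\delta_M=1/\log M$. Two bookkeeping slips are worth fixing. First, from $2A\psi_n=C/\sqrt{2}$ with $A=1$ you get $\psi_n=C/(2\sqrt{2})$, not $C/(4\sqrt{2})$; hence $\minimaxcs\ge \frac{C^2}{8n}\e_M$, and with your choice of $C^2$ this is what produces the prefactor $\sigma^2/(4\|\mbf A\|_F^2)$ in \eqref{eq:fin_samp_CS} --- with $C^2/(32n)$ you would be off by a factor of four. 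Second, the $-1$ in \eqref{eq:fin_samp_CS} does not come from expanding $\log M'$; it comes directly from the $(1-\delta_M)$ factor in $C^2$: with $\delta_M=1/\log M$ one has $(1-\delta_M)\log M=\log M-1=\tfrac{k}{4}\log\tfrac{n}{k}-1$. The $\delta_M$ appearing in the $M'$ term is instead absorbed as $(\lambda\delta_M M)^{-\lambda/(1+\lambda)}=((\log M)/\lambda)^{\lambda/(1+\lambda)}M^{-\lambda/(1+\lambda)}$, which is exactly what combines with the exponential to give \eqref{eq:eM_CS}.
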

\begin{proof}
To apply Theorem \ref{thm:main}, we equate the minimum distance  $C/\sqrt{2}$ of the packing subset $\packvalp{C/\sqrt{2}}'$ to  $2A\psi_n$. Taking $A=1$ gives $\psi_n = \frac{C}{2\sqrt{2}}$. Then, taking $w(t) =t^2$, we deduce that
\begin{eqnarray*}
\inf_{ \whx} \sup_{\vc{x} \in \Sigma_k} \ep \left[  \norm{  \whx(\vc{y}) - \vc{x}}^2 \right]
\geq  \left( \frac{C}{2 \sqrt{2}} \right)^2 \e_M.
\end{eqnarray*}
We can bound $\e_M$ by using \eqref{eq:subset_av} of Lemma \ref{lem:simplify_term} in Theorem \ref{thm:main}:
\begin{align*}
 \e_M & \geq  1 - \frac{ (1+\lambda)}{ \left( \lambda M'   \right)^{\lambda/(1+\lambda)}}
 \exp \left( \frac{ \lambda \| \mbf{A} \|_F^2 C^2 (1+\beta)}{
2 n \sigma^2(1-\delta_M) } \right) \\
& \geq  1 - \frac{ (1+\lambda)}{ \left( \lambda \delta_M   \right)^{\lambda/(1+\lambda)}}
 \exp \left( \lambda \left( \frac{ \| \mbf{A} \|_F^2 C^2 (1+\beta)}{
2 n \sigma^2(1-\delta_M) }- \frac{ \log M}{1+\lambda}  \right)   \right)
\end{align*}
Hence for any fixed $\lambda$ we obtain \eqref{eq:fin_samp_CS} and  \eqref{eq:eM_CS} by choosing
$$ C^2 =  \frac{2 n \sigma^2(1-\delta_M) \log M}{\norm{\mbf{A}}_F^2 (1+\beta)  (1+\lambda)} (1-\Delta),$$
with  $\delta_M = 1/\log M$ and $\Delta \in (0,1)$.
To obtain \eqref{eq:CS_asymp}, we recall from Remark \ref{rem:candesset} that $\beta$ can be chosen arbitrarily small as $n \to \infty$. Furthermore, $\lambda, \Delta$ can also be arranged to go $0$ (at suitably slow rates) as $n \to \infty$.
\end{proof}

\begin{rem} \label{rem:CS_comparison}  The paper \cite{candesDav13} uses Fano's inequality to derive the following bound:
\[  \mathsf{M}^*(\mbf{A}) \geq  \frac{\sigma^2} {32 \norm{\mbf{A}}_F^2(1+\beta)} \left( \frac{k}{4}\log \frac{n}{k} - 2 \right). \]
Comparing with Proposition \ref{prop:csbd}, we see that our result improves the bound by a  factor close to 8 for large $n$.
\end{rem}
\appendix
\section{Proof of Lemma \ref{lem:ch_hyp2}} \label{sec:proofch_hyp2}
For $(X,Y) \in  \mc{X} \times \mc{Y}$ consider  hypotheses 
$H_0:\; (X,Y) \sim Q$ and
$H_1:\; (X,Y) \sim P$, where 
we assume that $P \ll Q$ so that the Radon-Nikodym derivative $\frac{dP}{dQ}$ exists.

The following lemma can be found in \cite[Lemma 12.2]{pwLectNotes} (also  \cite[eq. (102)]{ppv10}).
\begin{lemma} 
For any randomized test $T$ to distinguish between the above hypotheses, and $\gamma >0$, we have 
\be P[T=1] - \gamma Q[T=1] \leq P\left[ \frac{dP}{dQ} > \gamma \right].  \label{eq:Egam_eq} \ee
\label{lem:simp_ineq}
\end{lemma}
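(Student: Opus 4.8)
The final statement to prove is Lemma \ref{lem:simp_ineq}, which asserts that for any randomized test $T$ distinguishing $H_0: (X,Y) \sim Q$ from $H_1: (X,Y) \sim P$ with $P \ll Q$, and any $\gamma > 0$,
\[
P[T=1] - \gamma\, Q[T=1] \leq P\!\left[ \frac{dP}{dQ} > \gamma \right].
\]

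\textbf{Plan.} The approach is a direct pointwise comparison of the two measures on the event $\{T=1\}$, splitting according to whether the likelihood ratio exceeds $\gamma$. A randomized test is specified by a measurable function $t(x,y) = P[T=1 \mid (X,Y)=(x,y)] \in [0,1]$, so that $P[T=1] = \int t \, dP$ and $Q[T=1] = \int t\, dQ$. First I would write $dP = \frac{dP}{dQ}\, dQ$ (valid since $P \ll Q$) to express everything against the single dominating measure $Q$:
\[
P[T=1] - \gamma\, Q[T=1] = \int t(x,y)\left( \frac{dP}{dQ}(x,y) - \gamma \right) dQ(x,y).
\]
Then split the domain into $E := \{(x,y): \frac{dP}{dQ}(x,y) > \gamma\}$ and its complement $E^c$. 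On $E^c$ the integrand $t\cdot(\frac{dP}{dQ} - \gamma)$ is $\leq 0$ since $t \geq 0$ and $\frac{dP}{dQ} - \gamma \leq 0$, so that part of the integral is non-positive and can be dropped. On $E$, bound $t \leq 1$ and $\frac{dP}{dQ} - \gamma \leq \frac{dP}{dQ}$, giving
\[
\int_E t\left( \frac{dP}{dQ} - \gamma \right) dQ \;\leq\; \int_E \frac{dP}{dQ}\, dQ \;=\; P[E] \;=\; P\!\left[ \frac{dP}{dQ} > \gamma \right].
\]
Combining the two pieces yields the claim.

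\textbf{Main obstacle.} There is essentially no deep obstacle here; the only points requiring a little care are measure-theoretic bookkeeping: ensuring the randomized test is correctly encoded as a $[0,1]$-valued conditional probability, justifying the change of measure $dP = \frac{dP}{dQ}\,dQ$ under $P \ll Q$ (so that the Radon--Nikodym derivative is well-defined $Q$-a.e.\ and the identity $P[E] = \int_E \frac{dP}{dQ}\,dQ$ holds for any measurable $E$), and checking integrability so the integrals may be split additively. Since $t$ is bounded and $\frac{dP}{dQ}$ is $Q$-integrable, all integrals involved are finite and the manipulations are legitimate. I would also note that the bound is tight: equality holds for the deterministic Neyman--Pearson test $t = \II(\frac{dP}{dQ} > \gamma)$, which is worth remarking since this is exactly the test that will be relevant when applying the lemma in the proof of Lemma \ref{lem:ch_hyp2}.
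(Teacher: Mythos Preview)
Your proof is correct and is the standard argument; the paper itself does not prove this lemma but simply cites it from the lecture notes \cite{pwLectNotes} and \cite{ppv10}, so there is nothing to compare against.

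One small correction to your closing remark: the Neyman--Pearson test $t = \II\!\left(\frac{dP}{dQ} > \gamma\right)$ does \emph{not} in general achieve equality in \eqref{eq:Egam_eq}. With that choice of $t$ the left-hand side equals $P[E] - \gamma\, Q[E]$ where $E = \{\frac{dP}{dQ} > \gamma\}$, which is strictly less than the right-hand side $P[E]$ whenever $Q[E] > 0$. What is true (and what the paper notes just after stating the lemma) is that this test \emph{maximizes} the left-hand side over all tests, and that maximum is the $E_\gamma$ divergence $\int \bigl(\frac{dP}{dQ} - \gamma\bigr)_+\, dQ$; the inequality itself then amounts to the trivial bound $\int_E \bigl(\frac{dP}{dQ} - \gamma\bigr)\, dQ \leq \int_E \frac{dP}{dQ}\, dQ$. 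This does not affect the validity of your proof, only the side comment.
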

We note that the maximum  of the left  hand side of  \eqref{eq:Egam_eq} (over all all tests $T$) is the $E_\gamma$ divergence \cite{liu2017e_gam,sasonV16}. We use \eqref{eq:Egam_eq} to complete the proof of Lemma \ref{lem:ch_hyp2}:
\begin{proof}[Proof of Lemma \ref{lem:ch_hyp2}]
As in \cite[Theorem 26]{ppv10}, let $\e_M$ and $\e'_M$ denote the average error probabilities over channels $P_{\vc{Y}|\theta}$ and $Q_{\vc{Y}|\theta} = Q_{\vc{Y}} $, respectively, 
 for a channel code with $M$ equiprobable codewords. 
Given $(\theta,\vc{Y})$, the result \cite[Theorem 26]{ppv10} describes a (sub-optimal) hypothesis test based on the channel decoder to distinguish between $H_0 : (\theta, \vc{Y}) \sim Q_{\theta\vc{Y}} = \pi_\theta Q_{\vc{Y}}$ and $H_1:  (\theta, \vc{Y}) \sim P_{\theta\vc{Y}} =  \pi_\theta P_{\vc{Y}|\theta}$. Let $T \in \{0, 1 \}$ denote the output of this test.
It is shown in the proof of that theorem that the probability of Type I error, i.e.,  $Q[T=1]$ is $1-\e'_M$, and the probability of type II error, i.e., $P[T=0] = \e_M$. Applying Lemma \ref{lem:simp_ineq} to this hypothesis test yields that for any $\gamma >0$,
 \be
 1- \e'_M \geq \frac{1}{\gamma} \left(1- \e_M -  P_{\theta\vc{Y}} \left[ \frac{dP}{dQ}  > \gamma \right]  \right). 
\label{eq:app_e'}
 \ee

We observe that  
 when  $Q_{\vc{Y}|\theta} = Q_{\vc{Y}} $, any channel decoder has  average error probability $\e'_M =\frac{M-1}{M}$. The result in
\eqref{eq:M_lb2} follows by substituting this value for $\e'_M$ in \eqref{eq:app_e'}.
\end{proof}

\section{Recovering Fano's Inequality from  Theorem \ref{thm:main}}  \label{app:Fano}

Here we show how to obtain Fano's inequality from  Theorem \ref{thm:main}.  
We first establish a general converse result  involving mutual information (equation \eqref{eq:lnM_bnd}), and then obtain  Fano's inequality from it.

From the variational representation  in \eqref{eq:sup_gam_bnd},  for any $\lambda, \gamma >0$ we have
\begin{eqnarray}
\frac{1}{M} & \geq & \  \frac{(1-\e_M)}{\gamma}  - \frac{1}{\gamma^{1+\lambda}} \sum_{i =1}^M  \frac{1}{M}  \int_{\mc{Y}} 
\left( \frac{dP_{\vc{Y}|\theta_i} }{dQ_{\vc{Y}}} (\vc{y}) \right)^{1+\lambda}  dQ_{\vc{Y}}(\vc{y}) \nonumber \\
& = & \  \frac{(1-\e_M)}{\gamma}  - \frac{1}{\gamma^{1+\lambda}} \left(
\lambda \mc{H}_{1+\lambda}(P_{\theta \vc{Y}} || Q_{\theta \vc{Y}} ) + 1 \right), \label{eq:lb_weak}
\end{eqnarray}
where  $H_{1+\lambda}(P \| Q) := \frac{1}{\lambda} \int \Big( \left(\frac{dP}{dQ} \right)^{1+\lambda} -1\Big) dQ$ is the Hellinger divergence of order $(1+\lambda)$  from distribution $P$ to distribution $Q$. We note from \eqref{eq:Renyi_div_def} that the R\'enyi and Hellinger divergences of order $(1+\lambda)$ are invertible functions of one another. We  use the following bound  \cite[Theorem 8]{sasonV16} for the Hellinger divergence:
\be
 \mc{H}_{1+\lambda}(P \| Q) 
\leq  \kappa(\lambda,t) D(P \| Q),
\label{eq:HDPQ_bnd}
\ee
where
\be
t: = \esssup \, \frac{dP}{dQ}(x,y), \mbox{ for $(x,y) \sim Q$} \mbox{\quad and \quad} 
\kappa(\lambda,t) =  \frac{\lambda + t^{1+\lambda} - (1+\lambda)t}{\lambda (t \log t + 1 -t)}.
\label{eq:tdef}
\ee
We choose
\be Q_{\vc{Y}} = \ol{P}_{\vc{Y}} = \sum_{j=1}^M \frac{1}{M}P_{ \vc{Y} |\theta_j} . \label{eq:qydef0} \ee 
With this $Q_{\vc{Y}}$, we have that $t \leq M$ since 
$\ol{P}_{\vc{Y}}(\mc{A}) \geq M^{-1} P_{\vc{Y}|\theta_j}(\mc{A})$ for all measurable sets $\mc{A}$. We also  have
\be D(P_{\theta\vc{Y}}||Q_{\theta\vc{Y}}) = I(\theta; \vc{Y})=   \sum_{i=1}^M \frac{1}{M} D(P_{\vc{Y}|\theta_i} || \ol{P}_{\vc{Y}}).  \label{eq:ID_rel} \ee 
Substituting in \eqref{eq:HDPQ_bnd} and then in \eqref{eq:lb_weak}, we obtain
\ben
 \frac{1}{M} \geq \frac{(1-\e_M)}{\gamma}  - \frac{1}{\gamma^{1+\lambda}} \left( \lambda \kappa(\lambda,t) I(\theta; \vc{Y})  +1\right).
\een
Maximizing  over $\gamma >0$ yields
\ben
M \leq \frac{(1 + \lambda)^{1+1/\lambda}}{\lambda\,  (1-\e_M)^{1+1/\lambda}} \left[ \lambda \kappa(\lambda,t) I(\theta; \vc{Y}) +1  \right]^{1/\lambda}.
\een
Taking logs, for any $\lambda >0$ we have
\be
\log M \leq \left( 1 + \frac{1}{\lambda} \right) \log \frac{1+\lambda}{1-\e_M} - \log \lambda + \frac{1}{\lambda}\log(1+ c(\lambda,t) I(\theta; \vc{Y})),
\label{eq:lnM_bnd}
\ee
where 
\begin{equation} \label{eq:cuppbd}
 c(\lambda, t) = \lambda \kappa(\lambda, t) = \frac{\lambda + t^{1+\lambda} - (1+\lambda)t}{t \log t + 1 -t} \leq \frac{t^{\lambda}}{\log t  -1}   \mbox{\;\;\; if $t \geq e$,}
\end{equation}
where the final inequality follows by direct comparison.

Hence, for a fixed $\lambda >0$ and $M \geq 3$, using \eqref{eq:cuppbd} and $t \leq M$  in  \eqref{eq:lnM_bnd} gives
\ben
0 \leq (1+\lambda) \log \frac{1+\lambda}{1-\e_M} - \lambda \log \lambda + \log \left(  M^{-\lambda} + \frac{I(\theta;\vc{Y})}{\log M - 1}  \right).
 \een
 Or,
\be \log M \leq 1 + I(\theta; \vc{Y}) \left( \frac{\lambda^\lambda (1-\e_M)^{1+\lambda} }{ (1+\lambda)^{1+\lambda} } - M^{-\lambda} \right)^{-1}.  \label{eq:lnMlamb} \ee
 Finally, noting that $\lambda$ can be chosen arbitrarily small, choose $\lambda = 1/(\log M)^{\alpha}$ for some $\alpha \in (0,1)$ in  \eqref{eq:lnMlamb}. We therefore have
 \ben \log M \leq 1+  \frac{I(\theta; \mbf{Y}) }{1-\e_M}(1+ o(1)).  \een
 Using the expression for $I(\theta; \mbf{Y})$ in \eqref{eq:ID_rel} and rearranging, we get
 \[ \e_M \geq  1-  \frac{ \frac{1}{M} \sum_{i=1}^M  D(P_{\theta_i}  || \ol{P}) }{\log M -1}(1+ o(1)),  \]
 where $o(1)$ denotes a term that goes to zero with growing $M$. We have thus recovered Fano's inequality in \eqref{eq:Fano} to within $o(1)$ terms.

\section{Proof of Lemma \ref{lem:explicit}} \label{sec:prooflemexmplicit}

\begin{proof}[Proof of Lemma \ref{lem:explicit}]
Since $Q_{\vc{Y}}$ is the uniform measure and each $\frac{dP_{\vc{Y}|\theta_i}}{dQ_{\vc{Y}}}$ corresponds to an $\ftau^n$, 
for each value of $i$, we can express the relevant integral  as
\begin{equation} \label{eq:tocontrol3}
 \int_{\mc{Y}} \left( \frac{dP_{\vc{Y}|\theta_i}}{dQ_{\vc{Y}}} (\vc{y}) \right)^{2} dQ_{\vc{Y}}(\vc{y})
  = \int_{[0,1]^n} \ftau^n(\vc{y})^2 d\vc{y} = \left( \int_0^1 \ftau(y)^2 dy \right)^n.
\end{equation}
For any $\vd{\tau}$ we can express the bracketed term on the RHS of \eqref{eq:tocontrol3} as
\begin{align}
\int_0^1 \ftau(y)^2 dy & =   \int_0^1  \left( 1 + \sum_{j=0}^{m-1} \tau_j g_j(y) \right)^2 dy \nonumber \\
& =   1 + 2 \sum_{j=0}^{m-1} \tau_j \int_0^1 g_j(y) dy + \sum_{j=0}^{m-1} \sum_{k=0}^{m-1} \tau_j \tau_k \int_0^1 g_j(y) g_k(y)  dy  
\nonumber \\
&\stackrel{(a)}{=}     1  + \frac{c^2 a}{m^4}. \label{eq:tocontrol2}
\end{align}
Here equality $(a)$ is obtained from \eqref{eq:gprop2} and  \eqref{eq:gjdef} since  $g_j(y) g_k(y) \equiv 0$ for $j \neq k$, and
\begin{align*}
\int_0^1 g_j(y) dy & = \frac{c}{m^2} \int_{j/m}^{(j+1)/m} g(m y - j) dy
= \frac{c}{m^2} \int_0^1 g(u) \frac{du}{m} = 0, \\
\int_0^1 g_j(y)^2 dy & = \frac{c^2}{m^4} \int_{j/m}^{(j+1)/m} g(m y - j)^2 dy
= \frac{c^2}{m^4} \int_0^1 g(u)^2 \frac{du}{m} = \frac{c^2 a}{m^5}. 
\end{align*}
The result follows on substituting \eqref{eq:tocontrol2} into \eqref{eq:tocontrol3} and using $(1+x)^n \leq \exp(x n)$ for any $x \in \re$.
\end{proof}

\section{Proof of Lemma \ref{lem:classmain}} \label{sec:proofclassmain}
\begin{proof}[Proof of Lemma \ref{lem:classmain}]
We can express the key ratio on the LHS of \eqref{eq:classmain} as
\begin{equation}
 \frac{dP_{\vd{\tau} }}{ dQ_{\vc{U},\vc{V}}} (\vc{U},\vc{V}) = \prod_{r=1}^n
 \frac{  dP_{\vd{\tau}} }{ dP_{\vc{0}}} (V_r|U_r)  = \prod_{r=1}^n
 \frac{  P_{\vd{\tau}}(V_r|U_r) }{ P_{\vc{0}}(V_r|U_r) }.
 \label{eq:dPQUV0}
\end{equation}
We note that $dP_{\vc{0}} (V_r|U_r) =  P_{\vc{0}} (V_r|U_r) d \nu(V_r)$, where $\nu$ represents the counting measure on $\{0,1\}$. Then,  using \eqref{eq:dPQUV0} we write the integral in \eqref{eq:classmain} as
\begin{eqnarray*}
\lefteqn{
\int_{\mc{Y}_n } 
\left( \prod_{r=1}^n
 \frac{  dP_{\vd{\tau}}}{ dP_{\vc{0}} } (v_r|u_r)  \right)^{1+\lambda}
\prod_{r=1}^n dP(u_r | (u,u)_r^-) \prod_{r=1}^n dP_{\vc{0}}( v_r |u_r) } \\
& = &  \int_{\mc{Y}_{n-1} } \;\;  \prod_{r=1}^{n-1} \left(
 \frac{  dP_{\vd{\tau}} }{ dP_{\vc{0}} } (v_r|u_r)  \right)^{1+\lambda} \biggl[ I_n \biggr]
\prod_{r=1}^{n-1} dP(u_r | (u,u)_r^-) \prod_{r=1}^{n-1} dP_{\vc{0}}( v_r |u_r) 
\end{eqnarray*}
where the inner integral $I_n$ can be written as
\begin{align}
    I_n & :=  \int_{[0,1]^d} \left( \frac{ [\eta_{\vd{\tau}}( u_n)]^{1+\lambda} }{ [\eta_{\vc{0}}(u_n)]^{\lambda}}  
+ \frac{[1-\eta_{\vd{\tau}}( u_n)]^{1+\lambda} }{[1- \eta_{\vc{0}}( u_n)]^{\lambda}}  \right) dP_{U_n | (U,V)_n^-}(u_n | (u,v)_n^-)  \nonumber  \\
& =  \int_{[0,1]^d}  \exp\left( \lambda D_{1+\lambda} \left( P_{\vd{\tau}}( \cdot | u_n) ||  P_{0}( \cdot | u_n) \right) \right)  dP_{U_n | (U,V)_n^-}(u_n | (u,v)_n^-) \label{eq:simple2},
\end{align} 
where we use the fact that the R\'{e}nyi divergence of order $(1+\lambda)$ between two Bernoulli random variables with parameters $ \eta_{\vd{\tau}}(u_n)$ and $\eta_{\vc{0}}(u_n)$, respectively, is 
\ben
D_{1+\lambda} \left( P_{\vd{\tau}}( \cdot | u_n) ||  P_{0}( \cdot | u_n) \right) = \frac{1}{\lambda} 
\log \left(  \frac{ [\eta_{\vd{\tau}}( u_n)]^{1+\lambda} }{ [\eta_{\vc{0}}(u_n)]^{\lambda}}  
+ \frac{[1-\eta_{\vd{\tau}}( u_n)]^{1+\lambda} }{[1- \eta_{\vc{0}}( u_n)]^{\lambda}} \right).
\een
Recalling that $\beta_m=LHm^{-\alpha}$ and $u_n \in [0,1]^d$, let us denote the $d$th coordinate of $u_n$ by $u_{n,d}$.
The construction of the hypercube class of functions $\FF_m$ in \cite[p.2350]{castroNowak08} ensures that  for any $\vd{\tau}, \vd{\tau'} \in \{0,1 \}^{m^{d-1}}$,  the following properties hold.
\begin{align}
 \eta_{\vd{\tau}}(u_n)  & = \eta_{\vd{\tau'}}(u_n),  \quad  \beta_m \leq u_{n,d} \leq 1, \\
 \frac{1}{2} - c \beta_m \leq \eta_{\vd{\tau}}(u_n) & \leq  \frac{1}{2} +  c \beta_m, \quad 0 \leq u_{n,d} \leq \beta_m, \label{eq:eta_UpLo} \\
 \abs{\eta_{\vd{\tau}}(u_n)  - \eta_{\vd{\tau'}}(u_n)} & \leq 2c \beta_m^{\kappa-1},   \quad \forall \, u_n \in [0,1]^{d}. \label{eq:eta_sep}
\end{align}
 
We now use the following bound on the R\'{e}nyi divergence due to Verd{\'u} and Sason \cite[Theorem 3]{sasonV15upper} for $u_{n,d} \leq \beta_m$ and $\lambda \in [0,1]$:
\be
D_{1+\lambda} \left( P_{\vd{\tau}}( \cdot | u_n) ||  P_{0}( \cdot | u_n) \right)  \leq \log\left(1 + \frac{2 \delta^2}{\min_{v \in \{0,1 \}} \, P_{0}( v | u_n)} \right),
\label{eq:D1L_ub}
\ee
where $\delta := \abs{\eta_{\vd{\tau}}(u_n) - \eta_{\vc{0}}(u_n)}$ is the total variation distance between $P_{\vd{\tau}}( \cdot | u_n)$ and   $P_{0}( \cdot | u_n)$.
Using  \eqref{eq:eta_sep} for an upper bound on $\delta$, and \eqref{eq:eta_UpLo} for a lower bound on the minimum of $P_{0}( \cdot | u_n)$, we have from \eqref{eq:D1L_ub},
\ben
D_{1+\lambda} \left( P_{\vd{\tau}}( \cdot | u_n) ||  P_{0}( \cdot | u_n) \right)  \leq \log \left(1 + \frac{8c^2 \beta_m^{2(\kappa-1)}}{\frac{1}{2} - c \beta_m }\right).
\een
Substituting this bound in \eqref{eq:simple2} to bound $I_n$, we obtain using $1+x \leq e^x$ that 
\[  I_n \leq  \left(1 + \frac{8 c^2 \beta_m^{2(\kappa-1)}}{\frac{1}{2} - c \beta_m} \right)^{\lambda} \leq \exp \left( \frac{16 c^2 \beta_m^{2(\kappa-1)} \lambda}{1 - 2c \beta_m } \right).  \]
The result follows by induction on $n$.
\end{proof}

\section{Proof of Lemma \ref{lem:simplify_term}} \label{sec:prooflemsimp}

Recall from \eqref{eq:pyx_prod} and \eqref{eq:qy_prod} that we take
$dP_{\theta_i}(\vc{y} | \theta_i)   =  \prod_{r=1}^m  \phi(y_r; \mbf{A}^T_r \theta_i, \sigma^2 )$ and
$dQ_{\vc{Y}}(\vc{y})  = \prod_{r=1}^m  \phi(y_r;  0, \sigma^2)$.

\begin{proof}[Proof of Lemma \ref{lem:simplify_term}]
For any $\lambda > 0$, we have
\begin{align*}
  \int_{\mc{Y}}  \left( \frac{dP_{\vc{Y}| \theta_i} }{dQ_{\vc{Y}}} (\vc{y}) \right)^{1+\lambda} dQ_{\vc{Y}}(\vc{y}) 
& =  \prod_{r=1}^m  \int_{\re} \frac{[\phi(y_r; \mbf{A}^T_r \theta_i, \sigma^2) ]^{1+\lambda}}
{[\phi(y_r;  0,  \sigma^2)]^\lambda} dy_r  \\
& \stackrel{(a)}{=}     
 \prod_{r=1}^m \exp \left(  \frac{ \lambda( 1+\lambda)}{2 \sigma^2}   \left( \mbf{A}^T_r \theta_i  \right)  ^2  \right) \\
& =   
  \exp \left(   \frac{\lambda( 1+\lambda)}{2 \sigma^2}   \| \mbf A \theta_i  \|^2 \right).
\end{align*}
The equality in step $(a)$ is obtained by completing the square
inside an exponential, and recognizing the remaining term as a multiple of a normal density.

To obtain \eqref{eq:subset_av}, we use Lemma \ref{lem:packsubset} to bound $\norm{\mbf{A} \theta_i}^2$ on the RHS of  \eqref{eq:pq_exp_bnd} for each $\theta_i \in \packvalp{C/\sqrt{2}}$.
\end{proof}
\subsection*{Acknowledgements}
 The authors  thank B. Nakibo{\u{g}}lu for pointing out the optimal choice of $Q_{\vc{Y}}$ given in Remark \ref{rem:optQ}, and Igal Sason for helpful comments. They also thank the Alan Turing Institute for funding to attend the scoping workshop `Information-Theoretic Foundations and Algorithms for Large-Scale Data Inference'. RV would like to acknowledge support from a Marie Curie Career Integration Grant (Grant Number 631489).

\bibliographystyle{ieeetr}
{\small \bibliography{references}}
\end{document}